\documentclass[a4paper,12pt,twoside,onecolumn]{article}
\usepackage{amsmath,amsthm,amsfonts,latexsym,amscd,amssymb, graphicx, color,enumerate, caption}

\usepackage[hidelinks]{hyperref}
\newtheorem{thm}{Theorem}[section]
\newtheorem{dfn}[thm]{Definition}
\newtheorem{rmk}[thm]{Remark}
\newtheorem{exl}[thm]{Example}

\newtheorem{propo}[thm]{Proposition}
\newtheorem{coro}[thm]{Corollary}

\newtheorem{lem}[thm]{Lemma}

\begin{document}
	\title{\bf { $W-\delta-\mu$ dual codes and LCD codes} }  
	\author{\bf Avanish Kumar Chaturvedi$^{a}$\footnote{Corresponding author} and \bf Satyadeep Pandey$^{b}$} 
	\date{Department of Mathematics, University of Allahabad \\ Prayagraj-211002, India \\ {\footnotesize  $^{a}$akchaturvedi.math@gmail.com, achaturvedi@allduniv.ac.in\\
			$^{b}$ pandeysatyam14082000@gmail.com}}     	
	\maketitle   
	\begin{abstract}
		We introduce a new product on the ambient space $F_q^n$ as a generalization  of Euclidean, Hermitian and $\delta$ products. We give some general properties of the dual codes, relation with Euclidean duals, definition and characterization of self orthogonal, self dual, dual containing and LCD codes along with certain existence conditions. Also, we calculate the dual codes of some classes of codes like repetition, binary and $\lambda$-constacyclic codes with respect to this product. Further, we extend and analyse this notion of the product for codes over semisimple rings.
		\end{abstract} 
		
		\textbf{Keywords} Dual codes, LCD codes, Self orthogonal codes, Self dual codes, Semisimple rings.

		\textbf{Mathematics Subject Classification}  94B05. 94B60.
		
	\section{Introduction} In coding theory, the notion of linear codes is of paramount importance for their easy implementation and handling in terms of generator and parity check matrices. When talking about linear codes, a great deal of significance is attached to their duals and the resulting interrelationships. As the research advances, focus has been on using different products for getting dual codes with desired properties for particular purposes.
	
	In this line of research, Fan and Zhang in 2017 \cite{Fan}, introduced the $l-Galois\: inner\: product$ generalizing the Euclidean and Hermitian product. This was further generalized by Dinh et al. in 2024 \cite{Dinh}, where they introduce $\delta$-product in which the second coordinate contributes $\delta$ twisted values in the product, where $\delta$ is a field automorphism.
	
	Szabo et al. studied non degenerate forms on the ambient space and the duals of codes with respect to these forms, exploring properties such as Mac williams relations among others (see \cite{Szabo}). The attempts have been to tweak the usual product in such a way that the resulting form is concrete and has manageable parameters for applications. In \cite{Meng} and \cite{Claude}, $\sigma$- inner product and $\sigma$-duals have been studied for classical and quantum codes where $\sigma$ is a semilinear automorphism on the ambient space $F_q^n$.
	
	The relation of a linear code with its dual can vary from self orthogonality to linear complementarity. It is also an area of interest to search for such codes, as they can have utility in new constructions such as quantum code constructions. In \cite{Egan}, the weighing matrix concept is used to construct self orthogonal codes.
	
	Such constructions are not limited for finite field alphabets only but also applicable for various classes of rings. For example in  \cite{Dinh} authors introduced $\delta$ inner product over semisimple rings and explored duals of linear codes over semisimple rings.
	
	Motivated by above studies we define a new product on vector spaces over a finite field having parameters $W$, $\delta$ and $\mu$, which offers flexibility in terms of dual constructions by virtue of scope of variation in these parameters. We explore properties of linear codes with respect to this product. Moreover, the definition of the product is generalized to the case where the code alphabet is a finite semi-simple ring $R$ and the codes are submodules of the free module $R^n$. We obtain various results regarding duals over $R$  using the results that we establish over a finite field alphabet. 
	
	The computational results presented in this work demonstrate that the proposed $W$--$\delta$--$\mu$ product offers significantly greater flexibility than the $\delta$-product in the construction of dual codes. For a fixed generator matrix over $\mathbb{F}_4$, the number of admissible products is significantly enlarged by allowing independent choices of $W$, field automorphisms $\delta$, and permutations $\mu$. Consequently, a single linear code can admit numerous distinct dual structures, thereby considerably expanding the range of attainable $W$--$\delta$--$\mu$ LCD and $W$--$\delta$--$\mu$ self-orthogonal codes. However, in support, we show in Example \ref{Better example} that since the $\delta$-automorphism is restricted to only two possible choices corresponding to the identity and Frobenius automorphisms, ther are only two choices of $\delta$-product. The computational results clearly show that the proposed product provides a much richer framework for investigating generalized duality.
	
	Another noteworthy observation is the agreement between the theoretical developments and the computational results. The example (see Example \ref{Better example}) shows that whenever the hypotheses of Proposition~\ref{Proposition:4.9} are satisfied, $W$--$\delta$--$\mu$ self-orthogonal codes surely exist. Moreover, for several generator matrices, the proposed product produces multiple choices of $W$, $\delta$ and $\mu$ yielding \emph{$W$--$\delta$--$\mu$ LCD codes} as well as numerous \emph{$W$--$\delta$--$\mu$ self-orthogonal codes}. These computations therefore not only validate the theoretical results but also demonstrate that the additional parameters introduced through $W$ and $\mu$ have practical significance in identifying new code families possessing prescribed duality properties.
	
	The outline of the paper is as follows. In section 2, we discuss some preliminaries important for the development of the rest of the sections. In section 3, we define the $W-\delta-\mu$ product establishing linearity in first, $\delta$ semilinearity in second and non degeneracy in both coordinates [\ref{4.1}]. We show that many existing products become particular cases of the $W-\delta-\mu$ product. An example [\ref{product example}] illustrating the working of the product has been given which also shows that it is non symmetric. Further we also give matrix representation of the product [\ref{3.5}].
	
	In section 4, we define the dual of a linear code with respect to the $W-\delta-\mu$ product. We establish that the $W-\delta-\mu$ dual is equal to the inverse image of the Euclidean dual under a $\delta$ semilinear bijection on $F_q^n$ [\ref{4.3}]. We show that the Mac Williams relations hold for the $W-\delta-\mu$ product [\ref{Mac Williams}]. Further we establish that for the notions of Euclidean and $W-\delta-\mu$ duals to coincide under a suitable condition on $\delta$ the choices for $W$ and $\mu$ are highly restricted [\ref{coincide theorem}]. We illustrate the dual calculation with the help of an example [\ref{dual example}]. Moreover, in subsection 4.1, we characterize the $W-\delta-\mu$ self orthogonality in terms of generator matrices [\ref{dual matrix}]. We also show [\ref{Proposition:4.9}] that for suitable $q$ and $n$ there always exists a $W$ which makes a given linear code $C$, $W-\delta-\mu$ self orthogonal.
	
	In section 5, we discuss $W-\delta-\mu$ dual containing and LCD codes. We give examples to show how these are different from Euclidean dual containing and LCD codes. We give a necessary and sufficient condition for existence of $W$ which makes a linear code $C$ dual containing [\ref{5.8}]. Further we characterize the LCD codes in terms of generator matrices [\ref{LCD matrix}]. Using Schwartz Zippel lemma we establish that for large enough code alphabets there always exists a $W$ which makes a given linear code $C$, $W-\delta-\mu$ LCD for arbitrary $\delta$ and $\mu$ [\ref{5.12}].
	
	In section 6, we calculate the $W-\delta-\mu$ duals of some classes of codes such as repetition, binary and $\lambda$ constacyclic codes. We give an example showing that the $W-\delta-\mu$ dual of a $\lambda$ constacylic code need not be constacylic [\ref{consta example}].
	
	In section 7, we generalize the definition of $W-\delta-\mu$ product to the case when the code alphabet is a finite semisimple ring. We obtain the component wise decomposition of a linear code [\ref{decomposition}] and its $W-\delta-\mu$ dual [\ref{8.2}] over the constituent fields. We establish the existence results for $W$ for a linear code to be $W-\delta-\mu$ self orthogonal and LCD over a semisimple ring.
	
	Finally, in section 8 we discuss conclusion of the work and describe future scope along with potential applications of the $W-\delta-\mu$ product.

	\section{Preliminaries} 
	
	Following \cite{Xing}, by a \emph{linear code} $C$ of length $n$ over alphabet $F_q$, a finite field, we mean an $F_q$ subspace of the vector space $F_q^n$. In case of a finite ring $R$, a linear code over $R$ of length $n$ is an $R$-submodule of $R^n$.
	
	If $C$ is a linear code over a finite field $F_q$ of dimension $k$ then it has a basis, say, $\{g_1,g_2,...,g_k\}\subseteq F_q^n$. A \emph{generator matrix} for $C$ is  $G=[g_{ij}]_{k\times n}$, where, $g_i=(g_{ij})_{j=1}^n\in F_q^n$. Then
	$C=\{uG\;:\;u\in\;F_q^k\}.$
	
	The \emph{Euclidean product} or the \emph{usual product} of two codewords $x=(x_1,x_2,...,x_n)$ and $y=(y_1,y_2,...,y_n)$ in $F_q^n$ is given by  $<x,y>_{Euclidean}=\displaystyle\sum_{i=1}^{n}x_iy_i$. With respect to the Euclidean product, we have the standard notion of dual space. Referring to \cite{Xing}, the \emph{Euclidean dual} of a linear code $C$ is
	$C^{\perp_{Euclidean}}=\{y\in F_q^n\;:\;<x,y>_{Euclidean}=0,\;\forall\;x\in\:C\}.$
	
	It is a vector subspace of $F_q^n$ with dimension $n-k$, where $k$ is the dimension of $C$ over $F_q$. A generator matrix $H$ for the Euclidean dual of a linear code $C$ is called a \emph{parity check matrix} for $C$ . We have the following relation between generator and parity check matrices.
	\begin{lem}\cite[Remark 4.5.5]{Xing}\label{GH theorem}
		Let $C$ be an $[n,k]$-linear code over $Fq$, with parity-check matrix $H$. Then $v\in F_q^n$ belongs to $C$ if and only if $v$ is orthogonal to every row of $H$; i.e.,
		$v\in C\iff vH^T = 0$. In particular, a $k \times n$ matrix $G$, is a
		generator matrix for $C$ if and only if the rows of $G$ are linearly independent
		and $GH^T=0$.
		
	\end{lem}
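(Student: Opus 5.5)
The first assertion follows from two facts: the Euclidean dual of $C^{\perp_{Euclidean}}$ is $C$ itself, and orthogonality to a spanning set is the same as orthogonality to the whole space. I would prove it by a dimension count rather than element-chasing.

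Let $H$ be an $(n-k)\times n$ parity-check matrix, with rows $h_1,\dots,h_{n-k}$ forming a basis of $C^{\perp_{Euclidean}}$.

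\emph{Forward direction.} Suppose $v\in C$. Each $h_i$ lies in $C^{\perp_{Euclidean}}$, so $<v,h_i>_{Euclidean}=0$ for every $i$. This says exactly that $vH^T=0$.

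\emph{Reverse direction.} Let $D=\{v\in F_q^n : vH^T=0\}$.
\begin{enumerate}
\item By the forward direction, $C\subseteq D$.
\item $D$ is the null space of the linear map $v\mapsto vH^T$ from $F_q^n$ to $F_q^{n-k}$.
\item Since $H$ has $n-k$ linearly independent rows, this map has rank $n-k$. By rank--nullity, $\dim D=n-(n-k)=k$.
\item We have $C\subseteq D$ and $\dim C=k=\dim D$, so $C=D$.
\end{enumerate}
This proves $v\in C\iff vH^T=0$. The only input used is the stated fact that $\dim C^{\perp_{Euclidean}}=n-k$, which guarantees that $H$ has exactly $n-k$ independent rows.

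\emph{The ``in particular'' clause.} Suppose first that $G$ is a generator matrix. Its $k$ rows are a basis of $C$, hence linearly independent. Each row lies in $C$, so by the first part each row $g$ satisfies $gH^T=0$; stacking these gives $GH^T=0$.

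Conversely, suppose the $k$ rows of $G$ are linearly independent and $GH^T=0$. Each row $g$ then satisfies $gH^T=0$, so $g\in C$ by the first part. We now have $k$ independent vectors in the $k$-dimensional space $C$, so they form a basis of $C$. Hence $G$ is a generator matrix for $C$.

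The only step that is more than unwinding definitions is the dimension count giving $D=C$ in the reverse direction.
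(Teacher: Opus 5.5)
Your argument is correct and complete. The paper gives no proof of its own and simply cites \cite[Remark 4.5.5]{Xing}. There the result comes out essentially as you derive it: from $(C^{\perp_{Euclidean}})^{\perp_{Euclidean}}=C$, which your rank--nullity count re-proves directly, plus the observation that $k$ independent vectors in the $k$-dimensional space $C$ form a basis.
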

		Following \cite{Alqarni}, we write, an additive map $\Gamma:R\longrightarrow R$ is referred to as an \emph{antiautomorphism} if it is bijective and satisfes $\Gamma(b)\Gamma(a)=\Gamma(ab)$ for every $a,b\in R.$ In the following, we recall \emph{$\sigma-$ sesquilinear form} as a generalization of bilinear form. 
	\begin{dfn}\cite[Section 3]{Fan}\label{2.1}
		For a ring A, an anti-automorphism $\sigma$ on A and a left A-module M, a $\sigma$-sesquilinear form on
		M is a map $<,>:M \times M \longrightarrow A $ such that if $x,y,z \:\in\: M$ and $a \:\in\: A$, then $<x+z,y>
		= <x,y>+<z,y>,\;<x+y,z>=<x,z>+<y,z>,\; <ax,y>=
		a<x,y>$ and $<x,ay>=<x,y>\sigma(a)$. 
	\end{dfn}
	
	The error detection and correction capability of a code is inherently linked to the minimum Hamming distance parameter which is equivalently described by the notion of Hamming weight. Following \cite{MacWilliams}, the \emph{Hamming weight} \(w(c)\) of a codeword \(c\) within a code is defined as the number of non-zero alphabets in \(c\). Also, for a code $C$ of length $n$, the \emph{Hamming weight enumerator polynomial} is a bivariate polynomial given by \(W_{C}(x,y)=\sum _{i=0}^{n}A_{i}x^{n-i}y^i\), where \(A_{i}\) is the number of codewords with weight \(i\).
	
	For a given linear code $C$ over $F_q$ and its dual $C^{\perp}$, with respect to any product, we can study various special cases such as $C\subseteq C^{\perp}$ (\emph{self orthogonal code}), $C= C^{\perp}$ (\emph{self dual code}), $ C^{\perp}\subseteq C$ (\emph{self dual containing}) and $C\cap C^{\perp}=\{0\}$ (\emph{linear complementary dual}, i.e., LCD codes).
	
	Within the class of linear codes, the \emph{cyclic codes} preserving cyclic shift property, are of particular importance for their easy description and implementation. Readers are referred to \cite[Chapter 4]{Huffman} for a detailed discussion on cyclic codes. Cyclic codes have been generalized to the class of \emph{constacyclic codes} which offer a wider class of algebraic codes. Following \cite{Minjia}, a linear code \(C\) of length \(n\) over a finite field \(F_q\) is called a \emph{\(\lambda \)-constacyclic code} (for some $
	\lambda\in F_q$) if it is closed under the \(\lambda \)-constacyclic shift operation \(T_{\lambda }\), defined as: \(T_{\lambda }(c_{0},c_{1},\dots ,c_{n-1})=(\lambda c_{n-1},c_{0},\dots ,c_{n-2})\) for every codeword \(c=(c_{0},c_{1},\dots ,c_{n-1})\in C\).
	
	A constacyclic code over a finite field $F_q$ is characterised  as an ideal of a quotient ring of the polynomial ring over $F_q$.
	\begin{propo}\cite[proposition 2.2]{Dinh}
		A linear code \(C\) of length $n$ is \(\lambda \)-constacyclic over $F$ if and only if \(C\) is an ideal of \(F_[x]/\langle x^{n}-\lambda \rangle \).
	\end{propo}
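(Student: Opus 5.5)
The plan is to use the standard identification of $F_q^n$ with the quotient ring $F[x]/\langle x^n-\lambda\rangle$, namely the $F_q$-linear isomorphism
\[
\phi:(c_0,c_1,\dots,c_{n-1})\longmapsto c_0+c_1x+\cdots+c_{n-1}x^{n-1}+\langle x^n-\lambda\rangle .
\]
Under $\phi$, a linear code $C$ corresponds to an $F_q$-subspace $\phi(C)$ of the quotient ring. The key observation is that $\phi$ converts the $\lambda$-constacyclic shift into multiplication by $x$. Indeed,
\[
x\cdot\phi(c)=c_0x+\cdots+c_{n-2}x^{n-1}+c_{n-1}x^n,
\]
and since $x^n\equiv\lambda$ modulo $x^n-\lambda$, this equals $\lambda c_{n-1}+c_0x+\cdots+c_{n-2}x^{n-1}=\phi(T_\lambda(c))$.

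For the forward direction, suppose $C$ is $\lambda$-constacyclic. Then $\phi(C)$ is closed under addition, since $C$ is linear and $\phi$ is additive. By the observation above, $\phi(C)$ is also closed under multiplication by $x$, and inductively under multiplication by every power $x^i$. Every element of the quotient ring is a class $\sum a_i x^i$ with $a_i\in F_q$. Combining $F_q$-scalar closure, closure under powers of $x$, and closure under addition, $\phi(C)$ is closed under multiplication by arbitrary ring elements, so it is an ideal.

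For the converse, suppose $\phi(C)$ is an ideal. Then it is closed under multiplication by the constants $a\in F_q$ and under addition, which makes $C$ a linear code. It is also closed under multiplication by $x$, which by the observation means $C$ is closed under $T_\lambda$.

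The only point needing care is the reduction $x^n\equiv\lambda$, which identifies $T_\lambda$ with multiplication by $x$; everything else is routine bookkeeping.
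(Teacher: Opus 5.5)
Your proof is correct. The paper gives no proof of its own here; it quotes the result from Dinh et al. Your argument is the standard proof behind that citation: transport $C$ to $F[x]/\langle x^n-\lambda\rangle$ by the coefficient map, then use $x^n\equiv\lambda$ to identify $T_\lambda$ with multiplication by $x$.

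The one fact worth stating explicitly is that $\phi$ is bijective. This holds because $x^n-\lambda$ is monic of degree $n$, so the classes of $1,x,\dots,x^{n-1}$ form an $F$-basis of the quotient. Injectivity is what lets you pass from $\phi(T_\lambda(c))=x\,\phi(c)\in\phi(C)$ back to $T_\lambda(c)\in C$ in the converse.
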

	A \emph{semisimple ring} is a ring that is semisimple as a module over itself, meaning it can be expressed as a finite direct sum of simple left (or right) ideals. In case of commutativity, it is well known that a commutative semisimple ring $R$ is a ring that is isomorphic to a finite direct product of fields, i.e, $R=\displaystyle\bigoplus_{i=1}^{n}F_i$, for some $n\in \mathbb{N}$, where $F_i$ is a field [cf \cite{Donald}].
	
	%%%%%%%%%%%%%%%%%%%%%%%%%%%%%%%%%%%%%%%%%%%%%%%%
	\section{$W$-$\delta$-$\mu$ product}
	\begin{dfn}	Let $F_q$ be a finite field, $ W = \{w_1, w_2, ..., w_n\} \subseteq F_q^*, \delta$ be a field automorphismm on $F_q$ and $\mu \in S_n$ be a permutation on  $ n $ symbols. We call, a map $<,>_{W-\delta-\mu} : F_q^n \times F_q^n \rightarrow F_q$ defined by
		$<x,y>_{W-\delta-\mu}\; = \displaystyle\sum_{i=1}^{n}w_ix_i\delta(y_{\mu(i)})$,
		where $x =(x_1,x_2,...,x_n), y=(y_1,y_2,...,y_n) \in F_q^n$ a \textbf{\emph{$W$-$\delta$-$\mu$ product}} on $F_q^n$.
		
		If $\delta=I_d$ and $\mu=I_d$ then we call the $W-\delta-\mu$ product as \emph{Weighted product}.
		
	    If $W=\{1\}$ and $\delta=I_d$ then we call the $W-\delta-\mu$ product as \emph{Permutation product}.
		
		\end{dfn}
	We first show that the $W-\delta-\mu$ product satisfies some useful basic properties of being a product.
	\begin{lem}\label{4.1}
		The ${W-\delta-\mu}$ product is linear in first, $\delta$-semilinear in second and non degenerate in both coordinates.
	\end{lem}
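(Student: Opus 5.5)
The plan is to verify the three properties directly from the defining formula $<x,y>_{W-\delta-\mu}=\sum_{i=1}^{n}w_ix_i\delta(y_{\mu(i)})$, using only that $\delta$ is a field automorphism, that each $w_i\in F_q^*$, and that $\mu$ is a bijection of $\{1,\dots,n\}$.

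\emph{Linearity in the first coordinate.} For $x,z,y\in F_q^n$ and $a\in F_q$, I will expand $<ax+z,y>_{W-\delta-\mu}=\sum_i w_i(ax_i+z_i)\delta(y_{\mu(i)})$. Distributivity in $F_q$ then splits this as $a<x,y>_{W-\delta-\mu}+<z,y>_{W-\delta-\mu}$.

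\emph{$\delta$-semilinearity in the second coordinate.} For the second coordinate I will use that the $\mu(i)$-th coordinate of $ay+z$ is $ay_{\mu(i)}+z_{\mu(i)}$. Additivity and multiplicativity of $\delta$ give $\delta(ay_{\mu(i)}+z_{\mu(i)})=\delta(a)\delta(y_{\mu(i)})+\delta(z_{\mu(i)})$. Since $F_q$ is commutative, this yields $<x,ay+z>_{W-\delta-\mu}=\delta(a)<x,y>_{W-\delta-\mu}+<x,z>_{W-\delta-\mu}$, which is the $\sigma$-sesquilinear condition of Definition \ref{2.1} with $\sigma=\delta$.

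\emph{Non-degeneracy in the first coordinate.} I must show that if $<x,y>_{W-\delta-\mu}=0$ for all $y$, then $x=0$. For a fixed index $j$, I will test against $y=e_{\mu(j)}$, the standard basis vector. Then $y_{\mu(i)}=1$ exactly when $\mu(i)=\mu(j)$, that is, when $i=j$, because $\mu$ is injective. Since $\delta(1)=1$ and $\delta(0)=0$, the sum collapses to $<x,e_{\mu(j)}>_{W-\delta-\mu}=w_jx_j$. As $w_j\neq 0$, we get $x_j=0$ for every $j$.

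\emph{Non-degeneracy in the second coordinate.} I must show that if $<x,y>_{W-\delta-\mu}=0$ for all $x$, then $y=0$. I will test against $x=e_i$, which gives $w_i\delta(y_{\mu(i)})=0$. Since $w_i\neq0$ and $\delta$ is injective, $y_{\mu(i)}=0$. As $i$ ranges over $\{1,\dots,n\}$, surjectivity of $\mu$ ensures $\mu(i)$ covers every index, so $y=0$.

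There is no real obstacle here. The only points needing care are the index bookkeeping with $\mu$ in the non-degeneracy arguments (injectivity in one direction, surjectivity in the other) and noting explicitly where $w_i\in F_q^*$ and the bijectivity of $\delta$ are used.
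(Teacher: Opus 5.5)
Your proposal is correct and follows essentially the same route as the paper: you expand the sum directly for linearity in the first coordinate and $\delta$-semilinearity in the second, then test against $e_{\mu(j)}$ (respectively $e_i$) to get $w_jx_j=0$ (respectively $w_i\delta(y_{\mu(i)})=0$) for non-degeneracy. Your version is slightly more careful than the paper's, since it states explicitly where injectivity and surjectivity of $\mu$, $\delta(1)=1$, the injectivity of $\delta$, and $w_i\neq 0$ are used.
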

	\begin{proof}Let $a,b \in F_q, x,y,z \in F_q^n$. First, we show linearity in the first co-ordinate. We have,
		\[
		\begin{aligned} <ax+by,z>_{W-\delta-\mu}\;&=\displaystyle \sum_{i=1}^{n}w_i(ax_i+by_i)\delta(z_{\mu(i)})
			\\&=a\displaystyle\sum_{i=1}^{n}w_ix_i\delta(z_{\mu(i)}) + b\displaystyle\sum_{i=1}^{n}w_iy_i\delta(z_{\mu(i)})
			\\&=a<x,z>_{W-\delta-\mu}+b<y,z>_{W-\delta-\mu}\end{aligned}
		\]
		
		Now, we show $\delta$-semilinearity in the second co-ordinate.
		We have,
		\[
		\begin{aligned}
			<x,a(y+z)>_{W-\delta-\mu}\;&= \displaystyle\sum_{i=1}^{n}w_ix_i\delta((ay+bz)_{\mu(i)})\\
			&=\displaystyle\sum_{i=1}^{n}w_ix_i\delta(ay_{\mu(i)}+bz_{\mu(i)})\\
			&=\displaystyle\sum_{i=1}^{n}w_ix_i\delta(ay_{
				\mu(i)}) + \displaystyle\sum_{i=1}^{n}w_ix_i\delta(bz_{\mu(i)})\\
			&=\delta(a)<x,y>_{W-\delta-
				\mu} + \delta(b)<x,z>_{W-\delta-\mu}
		\end{aligned}
		\]
		
		Finally, we show non degeneracy, in first co-ordinate. Let $x \in F_q^n$ such that $<x,y>_{W-\delta-\mu}=0,$ $ \forall y \in F_q^n$. Then
		\[
		\displaystyle\sum_{i=1}^{n}w_ix_i\delta(y_{\mu(i)})=0, \forall y \in F_q^n. \tag{1}
		\]
		For an arbitrary $i$, consider $y=(y_j)_n \in F_q^n $ such that $ y_j=1$ if $j=\mu(i)$ and $0$ otherwise. Then from (1), $w_ix_i=0$ which implies $x_i=0$ since $w_i$ is non zero. Hence, $x=0$.
		
		Next, for non degeneracy in second co-ordinate, let $y \in F_q^n$ such that $<x,y>_{W-\delta-\mu}=0, \forall x \in F_q^n$.	Then
		\[
		\displaystyle\sum_{i=1}^{n}w_ix_i\delta(y_{\mu(i)})=0, \forall x=(x_i)_1^n \in F_q^n \tag{2}.
		\]
		For an arbitrary $i$, we choose $x$ such that $x_j=1$ if $j=i$ and 0 otherwise. Then, from (2), $$w_ix_i\delta(y_{
			\mu(i)})=0
		\Rightarrow  \delta(y_{
			\mu(i)})=0 \Rightarrow y_{\mu(i)}=0 \Rightarrow y=0.$$
		Hence,$<,>$ is non degenerate in the second coordinate.
	\end{proof}
	\par 
	We note that this product generalizes Euclidean product if $ W=\{1\}$,
	$\delta$ is the identity on $F_q$ and $\mu$ is the identity permutation. If we consider $W=\{1\}$,
	$\delta$ is the Frobenius automorphism on $ F_q$ and $\mu$ is the identity permutation then it generalizes Hermitian product. Also it generalizes to $\delta$ product if $W=\{1\}, \delta$ is an $F_q$ automorphism and $\mu$ is the identity permutation.
		\begin{table}[ht]
		\centering
		\caption{Special cases of the $W$-$\delta$-$\mu$ product}
		\renewcommand{\arraystretch}{1.2}
		\begin{tabular}{|c|c|c|c|}
			\hline
			\emph{Product} & \emph{$W\subseteq F_q^{*}$} & \emph{$\delta$} & \emph{$\mu$}\\
			\hline
			Euclidean
			&
			$\{1\}$
			&
			$\mathrm{id}$
			&
			$\mathrm{id}$
			\\
			\hline
			Hermitian
			&
			$\{1\}$
			&
			Frobenius automorphism
			&
			$\mathrm{id}$
			\\
			\hline
			$\delta$-product
			&
			$\{1\}$
			&
			Any automorphism
			&
			$\mathrm{id}$
			\\
			\hline
			Weighted product
			&
			W
			&
			$\mathrm{id}$
			&
			$\mathrm{id}$
			\\
			\hline
			Permutation product
			&
			$\{1\}$
			&
			$\mathrm{id}$
			&
			Any permutation
			\\
			\hline
			$W$-$\delta$-$\mu$
			&
			W
			&
			Any automorphism
			&
			Any permutation
			\\
			\hline
		\end{tabular}
	\end{table}
	\begin{exl}
	
	Let ${F}_4=\{0,1,\alpha,\alpha^2\}, \alpha^2=\alpha+1,$
	and let $x=(1,\alpha,\alpha^2),
	y=(\alpha^2,\alpha,1).$	
	The nontrivial automorphism is $\delta(a)=a^2,$
	so that $\delta(0)=0,
	\delta(1)=1,
	\delta(\alpha)=\alpha^2,
	\delta(\alpha^2)=\alpha.$
	\begin{table}[h]
		\centering
		\captionsetup{justification=centering}
			\caption{Comparison of various products and their corresponding values on the vectors
			$x=(1,\alpha,\alpha^2)$ and
			$y=(\alpha^2,\alpha,1)$ over $ F_4$.}
		\renewcommand{\arraystretch}{1.3}
		\begin{tabular}{|l|l|l|l|c|}
			\hline
			\emph{Product} & \emph{$W\subseteq F_4^{*}$} & \emph{$\delta$} & \emph{$\mu$} & \emph{Value}\\
			\hline
			Euclidean
			&
			$\{1\}$
			&
			$\mathrm{id}$
			&
			$\mathrm{id}$
			&
			$\alpha^2$
			\\
			\hline
			Hermitian
			&
			$\{1\}$
			&
			$a\mapsto a^2$
			&
			$\mathrm{id}$
			&
			$0$
			\\
			\hline
			$\delta$-product
			&
			$\{1\}$
			&
			$a\mapsto a^2$
			&
			$\mathrm{id}$
			&
			$0$
			\\
			\hline
			Weighted
			&
			$\{1,\alpha,\alpha^2\}$
			&
			$\mathrm{id}$
			&
			$\mathrm{id}$
			&
			$0$
			\\
			\hline
			Permutation
			&
			$\{1\}$
			&
			$\mathrm{id}$
			&
			$(123)$
			&
			$\alpha$
			\\
			\hline
			$W$-$\delta$-$\mu$
			&
			$\{1,\alpha,\alpha^2\}$
			&
			$a\mapsto a^2$
			&
			$(123)$
			&
			$\alpha^2$
			\\
			\hline
		\end{tabular}
	\end{table}
\end{exl}
	In the following, we give an example for the further illustration of this product.
	
	\begin{exl}\label{product example}
		Consider $F_4 =\{0,1,\alpha,\alpha^2\}$, where $\alpha^2=\alpha+1$. Let $W= \{1,\alpha,1\}, \delta(x)=x^2$ and $\mu=(2\:3)\in S_3$. For $u=(u_1,u_2,u_3), v=(v_1,v_2,v_3)$ in $F_4^3$ we define, $<,>_{W-\delta-\mu}: F_4^3 \times F_4^3 \rightarrow F_4$ by $<u,v>_{W-\delta-\mu} = 1u_1\delta(v_{\mu(1)}) + \alpha u_2\delta(v_{\mu(2)})+1u_3\delta(v_{\mu(3)})
		=\:u_1\delta(v_1)\:+\alpha u_2\delta(v_3)+u_3\delta(v_2)
		=\:u_1v_1^2\:+ \alpha u_2v_3^2+u_3v_2^2.$
		For instance, if we take $u =\: (1,0,1)$ and $v=(1,1,0)$ then $<u,v>_{W-\delta-\mu}=0$ but $<v,u>_{W-\delta-\mu}=\:\alpha +1$.
	\end{exl}
	
	Thus, we observe that $v$ is orthogonal to $u$ but $u$ is not orthogonal to $v$. Also, it shows that the product is non symmetric.
	
	\begin{propo}
		
		The $W-\delta-\mu$ product induces a $\delta-$ sesquilinear form on $F_q^n$.\end{propo}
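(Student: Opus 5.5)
We will verify directly that the map $<,>_{W-\delta-\mu}$ satisfies the four axioms of Definition \ref{2.1}. We take $A=F_q$, $M=F_q^n$ viewed as a left $F_q$-module, and $\sigma=\delta$.

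First we check that $\delta$ is an admissible anti-automorphism. Since $F_q$ is commutative, any field automorphism $\delta$ is additive and bijective. It also satisfies $\delta(ab)=\delta(a)\delta(b)=\delta(b)\delta(a)$. Hence $\delta$ is an anti-automorphism in the sense recalled from \cite{Alqarni}, and Definition \ref{2.1} applies with $\sigma=\delta$.

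Next we check the four identities, most of which follow from Lemma \ref{4.1}.
\begin{enumerate}[(i)]
\item Additivity in the first slot, $<x+z,y>_{W-\delta-\mu}=<x,y>_{W-\delta-\mu}+<z,y>_{W-\delta-\mu}$, is the case $a=b=1$ of the linearity in the first coordinate proved in Lemma \ref{4.1}.
\item The identity $<ax,y>_{W-\delta-\mu}=a<x,y>_{W-\delta-\mu}$ is the case $b=0$ of the same linearity.
\item Additivity in the second slot is the case $a=b=1$ of the $\delta$-semilinearity in Lemma \ref{4.1}, using $\delta(1)=1$. Concretely, $\delta(y_{\mu(i)}+z_{\mu(i)})=\delta(y_{\mu(i)})+\delta(z_{\mu(i)})$, and the sum splits term by term.
\item For the last identity we compute
\[
<x,ay>_{W-\delta-\mu}=\sum_{i=1}^n w_ix_i\delta(ay_{\mu(i)})=\sum_{i=1}^n w_ix_i\delta(y_{\mu(i)})\delta(a)=<x,y>_{W-\delta-\mu}\,\delta(a).
\]
This uses multiplicativity of $\delta$ and commutativity of $F_q$ to move $\delta(a)$ to the right.
\end{enumerate}

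There is no real obstacle here. The only point needing care is matching conventions. Definition \ref{2.1} asks for $\sigma$ to be an anti-automorphism and places $\sigma(a)$ on the right, while the paper's $\delta$ is a field automorphism. Commutativity of $F_q$ reconciles both, so the product induces a $\delta$-sesquilinear form on $F_q^n$.
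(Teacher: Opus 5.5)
Your proposal is correct and follows essentially the same route as the paper: take $A=F_q$, $M=F_q^n$, $\sigma=\delta$ in Definition~\ref{2.1} and read off the axioms from Lemma~\ref{4.1}. You are more explicit than the paper in two places: you note that commutativity of $F_q$ makes the automorphism $\delta$ an anti-automorphism, and you verify the right-hand placement of $\delta(a)$ in $<x,ay>_{W-\delta-\mu}=<x,y>_{W-\delta-\mu}\,\delta(a)$.
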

	
	\begin{proof} In Definition \ref{2.1}, if we take $A=F_q$ a finite field and $M=F_q^n$, then anti automorphism $\sigma$ on $A$ is equal to automorphism $\delta$ on $F_q$. By Lemma \ref{4.1} and Definition \ref{2.1}, it follows that $W-\delta-\mu$ product is a $\delta$ sesquilinear form. 
	\end{proof}
	\begin{rmk}\label{3.5}
		We can write the $W-\delta-\mu$ product in terms of a suitable product of matrices. Let $u =(u_1,\:u_2,\:\cdots\:,u_n)$, $v =(v_1,\:,v_2,\:\cdots\:,v_n)\in F_q^n$ and $\{e_1,e_2,...,e_n\}$ be the usual basis of $F_q^n$. Then $<u,v>_{W-\delta-\mu}$ can be represented by the following matrix product:
		
		\[
		\begin{bmatrix}
			u_1 & u_2 & \cdots & u_n\\
		\end{bmatrix} 
		\;
		\begin{bmatrix}
			w_1 & 0 & 0 & \cdots & 0\\
			0 & w_2 & 0 & \cdots & 0\\
			\vdots & \vdots & \vdots & \ddots & \vdots\\
			0 & 0 & 0 & \cdots & w_n
		\end{bmatrix}
		\;
		\begin{bmatrix}
			e_{\mu(1)} \\
			e_{\mu(2)} \\
			\vdots \\
			e_{\mu(n)}
		\end{bmatrix}
		\;
		\begin{bmatrix}
			\delta(v_1)\\
			\delta(v_2)\\
			\vdots\\
			\delta(v_n)
		\end{bmatrix},
		\] where $e_{\mu(1)}, e_{\mu(2)}, \ldots, e_{\mu(n)}\in\:F_q^n.$ 
		If we denote, $u=[u_1\:u_2\:...u_n]$, $D_W= diag(w_1,w_2,...,w_n)$, $P=\begin{bmatrix}
			e_{\mu(1)} \\
			e_{\mu(2)} \\
			\vdots \\
			e_{\mu(n)}
		\end{bmatrix}$, $\Delta(v)=\begin{bmatrix}
			\delta(v_1) &
			\delta(v_2)&
			\hdots&
			\delta(v_n)
		\end{bmatrix}$, then $<u,v>_{W-\delta-\mu}= u D_WP[\Delta(v)]^T.$
	\end{rmk}
	
	\section{$W$-$\delta$-$\mu$ dual codes}
	\begin{dfn} Let $C$ be a code over $F_q$. Then the $W-\delta-\mu$ dual of $C$, denoted by $C^{\perp_{W-\delta-\mu}}$ is defined as;
		$C^{\perp_{W-\delta-\mu}}\;=\{y\in F_q^n\;:\; <x,y>_{W-\delta-\mu}=0\; \forall \; x \in C\}.$
	\end{dfn}

	\begin{propo} If $C$ is a linear code over $F_q$, then $C^{\perp_{W-\delta-\mu}}$ is again linear.
	\end{propo}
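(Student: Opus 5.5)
To prove that $C^{\perp_{W-\delta-\mu}}$ is a linear code, I will check directly that it is an $F_q$-subspace of $F_q^n$. The only tool needed is the $\delta$-semilinearity of the product in the second coordinate, established in Lemma \ref{4.1}. Note that linearity of $C$ itself is not actually used: the dual of any subset is a subspace, because the defining conditions are semilinear in $y$.

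First, $0\in C^{\perp_{W-\delta-\mu}}$. Indeed, $\delta(0)=0$, so $<x,0>_{W-\delta-\mu}=\sum_{i=1}^n w_ix_i\delta(0)=0$ for every $x\in C$. Hence the dual is nonempty.

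Next, let $y,z\in C^{\perp_{W-\delta-\mu}}$, let $a,b\in F_q$, and fix an arbitrary $x\in C$. By Lemma \ref{4.1},
\[
<x,ay+bz>_{W-\delta-\mu}=\delta(a)<x,y>_{W-\delta-\mu}+\delta(b)<x,z>_{W-\delta-\mu}=\delta(a)\cdot 0+\delta(b)\cdot 0=0.
\]
Since $x\in C$ was arbitrary, $ay+bz\in C^{\perp_{W-\delta-\mu}}$, so the dual is closed under linear combinations and is therefore a subspace.

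There is no real obstacle here. The one point to handle carefully is that the scalars come out of the second coordinate twisted by $\delta$ rather than unchanged. This is harmless because $\delta$ multiplies a quantity that is already zero. I would also note that the displayed computation in the proof of Lemma \ref{4.1} writes $a(y+z)$ where it means $ay+bz$. To avoid relying on that typo, I would cite the semilinearity property in its intended form, or re-derive it in one line by additivity and multiplicativity of $\delta$: $\delta(ay_{\mu(i)}+bz_{\mu(i)})=\delta(a)\delta(y_{\mu(i)})+\delta(b)\delta(z_{\mu(i)})$.
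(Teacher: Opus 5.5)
Your proof is correct and is the paper's argument: take $y,z$ in the dual and scalars $a,b$, then apply the $\delta$-semilinearity from Lemma \ref{4.1} to get $<x,ay+bz>_{W-\delta-\mu}=\delta(a)\cdot 0+\delta(b)\cdot 0=0$ for every $x\in C$. Your extra remarks are accurate but not needed for the argument: the explicit check that $0$ lies in the dual, the observation that linearity of $C$ is never used, and the $a(y+z)$ versus $ay+bz$ slip in the proof of Lemma \ref{4.1}.
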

	
	\begin{proof} Let $s_1, \;s_2 \in\; F_q,\; y,z \in C^{\perp_{W-\delta-\mu}}$. Then for every $x$ in $C$, $<x,y>_{W-\delta-\mu}=0=<x,z>_{W-\delta-\mu}$. By Lemma \ref{4.1}, it follows that
		$<x,s_1y+s_2z>_{W-\delta-\mu}=\delta(s_1)<x,y>_{W-\delta-\mu}+\delta(s_2)<x,z>_{W-\delta-\mu}=0$.
		Hence, $s_1y+s_2z \in C^{\perp_{W-\delta-\mu}}$. Thus, $C^{\perp_{W-\delta-\mu}}$ is linear over $F_q$.
	\end{proof}
	Now, we show that the sum of dimensions of a linear code $C$ over $F_q$ and its $W-\delta-\mu$ dual is equal the length of the code. 
	First, we give the following result.
	\begin{lem}\label{4.3}
		Let $C$ be a linear code over $F_q$. Then there exists a $\delta$-semilinear bijection $T$ on $F_q^n$ such that $C^{\perp_{W-\delta-\mu}}=T^{-1}(C^{\perp_{Euclidean}})$.
	\end{lem}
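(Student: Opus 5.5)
The plan is to write down $T$ explicitly, using the matrix form of Remark \ref{3.5}, and then to check that $y\in C^{\perp_{W-\delta-\mu}}$ if and only if $T(y)\in C^{\perp_{Euclidean}}$.

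By Remark \ref{3.5}, for $x,y\in F_q^n$ we have $<x,y>_{W-\delta-\mu}= xD_WP[\Delta(y)]^T$. So we define $T:F_q^n\to F_q^n$ by
\[
T(y)=\left(D_WP[\Delta(y)]^T\right)^T=\Delta(y)P^TD_W,
\]
where we use that $D_W$ is diagonal. Coordinatewise this reads $T(y)_i=w_i\,\delta(y_{\mu(i)})$. With this definition, $<x,y>_{W-\delta-\mu}=\sum_i x_iT(y)_i=<x,T(y)>_{Euclidean}$ for all $x,y$. This identity is the heart of the argument.

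Next we verify the properties of $T$.
\begin{enumerate}[(i)]
\item \emph{Additivity.} $T$ is additive because $\delta$ is additive.
\item \emph{$\delta$-semilinearity.} We have $T(ay)_i=w_i\delta(a)\delta(y_{\mu(i)})$, so $T(ay)=\delta(a)T(y)$.
\item \emph{Bijectivity.} $T$ is injective, since $T(y)=0$ forces $\delta(y_{\mu(i)})=0$ for all $i$ (each $w_i\neq0$). This gives $y_{\mu(i)}=0$ for all $i$, and as $\mu$ is a permutation, $y=0$. Since $F_q^n$ is finite, an injective map is bijective. Alternatively, we can write the inverse explicitly: $T^{-1}(z)_j=\delta^{-1}\!\left(w_{\mu^{-1}(j)}^{-1}z_{\mu^{-1}(j)}\right)$.
\end{enumerate}
The one point to check carefully is that injectivity follows from additivity: $T$ is additive, so it is a group homomorphism, and its kernel being zero gives injectivity.

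Finally, $y\in C^{\perp_{W-\delta-\mu}}$ if and only if $<x,T(y)>_{Euclidean}=0$ for all $x\in C$. This holds if and only if $T(y)\in C^{\perp_{Euclidean}}$, that is, if and only if $y\in T^{-1}(C^{\perp_{Euclidean}})$.

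There is no real obstacle here. The only care needed is in the indexing of the permutation, so that the $i$-th coordinate of $T(y)$ really is $w_i\delta(y_{\mu(i)})$, consistent with the matrix $P$ whose rows are $e_{\mu(i)}$.
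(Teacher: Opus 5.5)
Your proposal is correct and follows essentially the same route as the paper: the same map $T(y)_i=w_i\,\delta(y_{\mu(i)})$, the identity $<x,y>_{W-\delta-\mu}=<x,T(y)>_{Euclidean}$, and the same chain of equivalences. The differences are cosmetic. You read $T$ off the matrix form of Remark~\ref{3.5}, and you get bijectivity from injectivity plus finiteness, whereas the paper exhibits an explicit preimage for surjectivity; your inverse formula agrees with that preimage.
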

	\begin{proof}
		We define, $T:F_q^n\longrightarrow F_q^n$ such that for every $i$, $(T(v))_i=w_i\delta(v_{\mu(i)})$, $v\in F_q^n,\:w_i\in W$. For any $v\in F_q^n$, consider $u\in F_q^n$ with $u_i=\delta^{-1}(w^{-1}_{\mu^{-1}(i)}v_{\mu^{-1}(i)})$. Then $T(u)_i=v_i$. Thus, $T(u)=v$. Hence, $T$ is onto. It is easy to check that $T$ is one-one and $\delta-$ semilinear.
		Now,
		\[\begin{aligned}
			v\in C^{\perp_{W-\delta-\mu}} &\iff <u,v>_{W-\delta-\mu}=0\\
			&\iff <u,T(v)>_{Euclidean}=0\\
			&\iff T(v)\in\;C^{\perp_{Euclidean}}\\
			&\iff v\in\; T^{-1}(C^{\perp_{Euclidean}}).
		\end{aligned}
		\] 
		Hence, $C^{\perp_{W-\delta-\mu}}=T^{-1}(C^{\perp_{Euclidean}})$.
	\end{proof}
	
	\begin{propo}\label{4.4} Let $C$ be a linear code of length $n$ over $F_q.$ Then 
		\begin{enumerate}[(1)]
			\item $dim(C)+dim(C^{\perp_{W-\delta-\mu}})=n$ and
			\item $|C||C^{\perp_{W-\delta-\mu}}|=q^n.$			
		\end{enumerate}
	\end{propo}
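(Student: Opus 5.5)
The plan is to reduce everything to the Euclidean case via Lemma \ref{4.3}. That lemma gives a $\delta$-semilinear bijection $T$ on $F_q^n$, defined by $(T(v))_i=w_i\delta(v_{\mu(i)})$, with $C^{\perp_{W-\delta-\mu}}=T^{-1}(C^{\perp_{Euclidean}})$. It then suffices to show two things: $T^{-1}$ carries subspaces to subspaces of the same dimension, and the Euclidean dual satisfies $\dim C+\dim C^{\perp_{Euclidean}}=n$. The latter is the standard fact recalled in Section 2.

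For the first point, I would note that $T^{-1}$ is $\delta^{-1}$-semilinear. Let $D=C^{\perp_{Euclidean}}$ have basis $d_1,\dots,d_{n-k}$. Then $T^{-1}(d_1),\dots,T^{-1}(d_{n-k})$ span $T^{-1}(D)$: indeed $T^{-1}(\sum a_jd_j)=\sum \delta^{-1}(a_j)T^{-1}(d_j)$, and $\delta^{-1}$ is surjective on $F_q$. These vectors are also linearly independent. If $\sum b_jT^{-1}(d_j)=0$, applying $T$ gives $\sum \delta(b_j)d_j=0$, so every $\delta(b_j)=0$ and hence every $b_j=0$ by injectivity of $\delta$.

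It follows that $\dim C^{\perp_{W-\delta-\mu}}=\dim C^{\perp_{Euclidean}}=n-k$, which proves (1). Part (2) follows at once, since $|C|=q^{k}$ and $|C^{\perp_{W-\delta-\mu}}|=q^{n-k}$.

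The only delicate step is verifying that a semilinear, rather than linear, bijection preserves dimension. This rests on $\delta$ being a field automorphism, and the argument above handles it. Alternatively, since $T^{-1}$ is a bijection of finite sets, $|T^{-1}(D)|=|D|=q^{n-k}$. This gives (2) directly, and (1) then follows from the linearity of $C^{\perp_{W-\delta-\mu}}$ established in the preceding proposition.
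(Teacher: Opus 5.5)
Your proposal is correct and follows the paper's own route: Lemma~\ref{4.3} transfers the problem to the Euclidean dual, and the paper then uses exactly your ``alternative'' cardinality observation $|T^{-1}(C^{\perp_{Euclidean}})|=|C^{\perp_{Euclidean}}|$. Your basis argument, showing that the $\delta^{-1}$-semilinear bijection $T^{-1}$ preserves dimension, is a valid and slightly more explicit way of carrying out the same step.
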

	\begin{proof} It follows by Lemma \ref{4.3} that both Euclidean and $W-\delta-\mu$ duals have the same cardinality. Thus,
		$dim(C^{\perp_{Euclidean}})=n-dim(C)=dim(C^{\perp_{W-\delta-\mu}})$. Hence, $dim(C)+dim(C^{\perp_{W-\delta-\mu}})=n$. The second result directly follows from (1).
	\end{proof}
	
	In \cite[Chapter 5, Theorem 13]{MacWilliams} authors discuss Mac Williams theorem for Hamming weight enumerators, given by, \(W_{C^{\perp }}(x,y)=\frac{1}{|C|}W_{C}(x+(q-1)y,x-y)\), where, $C$ is a linear code over a finite field $F_q$ with \(W_{C}(x,y)\) as its Hamming weight enumerator polynomial, \(C^{\perp }\) as its Euclidean dual, and \(W_{C^{\perp }}(x,y)\) as the weight enumerator polynomial of \(C^{\perp }\).
	In the following we show that Mac Williams relations are also valid for the $W-\delta-\mu$ product.
	
	\begin{propo}\label{Mac Williams} Let $C$ be a linear code over $F_q$. Then the Hamming weight enumerator polynomials of $C$ and $C^{\perp_{W-\delta-\mu}}$ satisfy the Mac Williams relations.\end{propo}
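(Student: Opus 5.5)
The plan is to reduce the statement to the classical MacWilliams identity for the Euclidean dual, via Lemma \ref{4.3}. That lemma gives a $\delta$-semilinear bijection $T$ on $F_q^n$, defined by $(T(v))_i=w_i\delta(v_{\mu(i)})$, such that $C^{\perp_{W-\delta-\mu}}=T^{-1}(C^{\perp_{Euclidean}})$. The claim to establish is
\[
W_{C^{\perp_{W-\delta-\mu}}}(x,y)=\frac{1}{|C|}W_C(x+(q-1)y,\,x-y).
\]
It therefore suffices to show that $C^{\perp_{W-\delta-\mu}}$ and $C^{\perp_{Euclidean}}$ have the same Hamming weight enumerator. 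Once that is done, the Euclidean MacWilliams theorem \cite[Chapter 5, Theorem 13]{MacWilliams} transfers verbatim.

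The key step is that $T$, and hence $T^{-1}$, preserves Hamming weight. Take $v\in F_q^n$. The $i$-th coordinate of $T(v)$ is $w_i\delta(v_{\mu(i)})$. Since $w_i\neq 0$ and $\delta$ is a field automorphism with $\delta(0)=0$, this coordinate is nonzero exactly when $v_{\mu(i)}\neq 0$. Because $\mu$ is a permutation, the map $i\mapsto\mu(i)$ is a bijection from $\{i: T(v)_i\neq 0\}$ onto $\{j: v_j\neq 0\}$. So $w(T(v))=w(v)$.

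Now $T^{-1}$ restricts to a bijection from $C^{\perp_{Euclidean}}$ onto $C^{\perp_{W-\delta-\mu}}$, and it preserves weight. Hence the number $A_i$ of codewords of weight $i$ is the same in both codes for every $i$, and the two weight enumerator polynomials coincide. Substituting into the Euclidean MacWilliams identity gives the displayed relation.

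I do not expect a serious obstacle. The only care needed is in the weight-preservation step, where one must note explicitly that it uses all three hypotheses: the $w_i$ are nonzero, $\delta$ is injective with $\delta(0)=0$, and $\mu$ is a bijection. Semilinearity of $T$ plays no role beyond what Lemma \ref{4.3} already supplies.
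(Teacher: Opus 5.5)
Your proposal is correct and follows the paper's proof: both use the bijection $T$ from Lemma \ref{4.3}, show it preserves Hamming weight because each $w_i\neq 0$, $\delta$ is an automorphism and $\mu$ is a permutation, and conclude that $C^{\perp_{W-\delta-\mu}}=T^{-1}(C^{\perp_{Euclidean}})$ has the same weight enumerator as $C^{\perp_{Euclidean}}$, so the classical MacWilliams identity carries over. Your remark that $i\mapsto\mu(i)$ maps the support of $T(v)$ bijectively onto the support of $v$ makes explicit a step the paper leaves implicit.
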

	
	\begin{proof} Following the proof of Lemma \ref{4.3}, the map $T:F_q^n\longrightarrow F_q^n$ such that for every $i$, $(T(v))_i=w_i\delta(v_{\mu(i)})$, $v\in F_q^n,\:w_i\in W$ is bijective, $<u,v>_{W-\delta-\mu}=<u,T(v)>_{Euclidean} $ and $C^{\perp_{W-\delta-\mu}}=T^{-1}(C^{\perp_{Euclidean}}).$ Also, if $w_H$ denotes the Hamming weight, then $\forall\; v\in F_q^n,\;w_H(T(v))=|\{i:T(v)_i\neq0\}|=|\{i:w_i\delta(v_{\mu(i)})\neq 0\}|=|\{i: v_{\mu(i)}\neq 0\}|=w_H(v).$ Thus, $T$ preserves Hamming weight. Therefore, the Hamming weight distribution and hence the Hamming weight enumerator polynomials, $W_{C^{\perp_{Euclidean}}}(x,y)$ and $W_{C^{\perp_{W-\delta-\mu}}}(x,y)$ for the Euclidean and the $W-\delta-\mu$ duals are the same. Thus, the Mac Williams relations hold for the $W-\delta-\mu$ product as they hold for the Euclidean product. \end{proof}
	If $W$ is singleton, $\mu $ is the identity permutation and $\delta$ is the identity automorphism then obviously for all linear codes $C$ over $F_q$, the $W-\delta-\mu$ dual and the Euclidean dual coincide. But the converse is not obvious and in the following, we show that under a suitable condition on $\delta$ it holds.
	\begin{thm}\label{coincide theorem} If $\delta(xy^{-1})=xy^{-1},\;\forall\;x,y\in F_q$  then $C^{\perp_{W-\delta-\mu}}=C^{\perp_{Euclidean}}\;$ for every linear code $C$ over $F_q$if and only if $\mu=I_d,\delta=I_d $ and $ W=\{\lambda\}$ for a fixed $\lambda\in\;F_q^*$.\end{thm}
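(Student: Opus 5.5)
The hypothesis on $\delta$ already forces $\delta=I_d$: taking $y=1$ gives $\delta(x)=x$ for all $x\in F_q$. So the content of the theorem is a statement about the product $\sum_i w_i x_i y_{\mu(i)}$. Throughout, I read ``$W=\{\lambda\}$'' as $w_1=\dots=w_n=\lambda$.

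For the ``if'' direction, suppose $\mu=I_d$, $\delta=I_d$ and all $w_i=\lambda\neq 0$. Then $<x,y>_{W-\delta-\mu}=\lambda<x,y>_{Euclidean}$, so the two products vanish on exactly the same pairs. Hence the two duals coincide for every $C$.

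For the ``only if'' direction I will use the map $T$ from the proof of Lemma \ref{4.3}. With $\delta=I_d$ it is the linear bijection $(T(v))_i=w_iv_{\mu(i)}$, and it satisfies $C^{\perp_{W-\delta-\mu}}=T^{-1}(C^{\perp_{Euclidean}})$. The assumption therefore says $T^{-1}(D)=D$ for every subspace $D$ of the form $C^{\perp_{Euclidean}}$. Since $C\mapsto C^{\perp_{Euclidean}}$ is onto the set of all subspaces (every $D$ equals $(D^{\perp_{Euclidean}})^{\perp_{Euclidean}}$), $T$ must leave every subspace of $F_q^n$ invariant. It then suffices to test one-dimensional subspaces.

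First take $D=\mathrm{span}(e_j)$. A direct computation gives $T(e_j)=w_{\mu^{-1}(j)}e_{\mu^{-1}(j)}$. This lies in $\mathrm{span}(e_j)$ only if $\mu^{-1}(j)=j$, and this holds for every $j$, so $\mu=I_d$. Next, with $\mu=I_d$, take $D=\mathrm{span}(e_j+e_k)$ for $j\neq k$. Then $T(e_j+e_k)=w_je_j+w_ke_k$ must be a scalar multiple of $e_j+e_k$, which forces $w_j=w_k$. Hence all weights equal a common $\lambda\in F_q^*$. The case $n=1$ is trivial, since then $\mu=I_d$ automatically and $W$ is a single element.

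The main point needing care is the reduction from ``duals agree for all $C$'' to ``$T$ fixes every subspace''. This requires the surjectivity of Euclidean duality onto subspaces together with the fact that $T$ is a bijection. Once that reduction is in place, the rest is the standard fact that a linear map preserving all lines is a scalar, checked above on the coordinate lines and on the lines spanned by $e_j+e_k$.
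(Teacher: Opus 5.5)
Your proof is correct and takes essentially the same route as the paper: both use the map $T$ of Lemma \ref{4.3}, and both use the fact that every subspace is a Euclidean dual to force $T$ to preserve every line. Your version is more streamlined. You note that $y=1$ in the hypothesis gives $\delta=I_d$ at once, and you test $T$ only on the lines spanned by $e_j$ and $e_j+e_k$. The paper instead first proves in general that a line-preserving $T$ satisfies $T(v)=\lambda\sum_i\delta(v_i)e_i$, then uses the hypothesis on $\delta$ to get $\delta=I_d$, and finally reads off $\mu=I_d$ and $W=\{\lambda\}$ from $T(v)=\lambda v$.
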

	
	\begin{proof} By Lemma \ref{4.3}, $C^{\perp_{W-\delta-\mu}}=T^{-1}(C^{\perp_{Euclidean}})$, where $T:F_q^n\longrightarrow F_q^n$ is given by $(T(y))_i=w_i\delta(y_{\mu(i)})$. Thus, $C^{\perp_{W-\delta-\mu}}=C^{\perp_{Euclidean}}$ if and only if $T^{-1}(C^{\perp_{Euclidean}})= C^{\perp_{Euclidean}}$ which is if and only if $T(C^{\perp_{Euclidean}})=C^{\perp_{Euclidean}}$, equivalently, $T$ preserves all Euclidean duals.
		
		Let $x\in\:F_q^*$. Then $(<x>^{\perp_{Euclidean}})^{\perp_{Euclidean}}=<x>$ where $<x>$ is the subspace generated by $x$. Thus, every one dimensional space is the Euclidean dual of some space, hence, preserved by $T$. Thus, $T(x)=g_xx\;\forall \; x\in F_q^n$, where $g_x\in\:F_q$ is a scalar. Now, let $x$ and $y$ be two linearly independent vectors in $F_q^n$. Then $T(x+y)=g_{(x+y)}(x+y)$ and $T(x+y)=T(x)+T(y)=g_xx+g_yy$. Equating, we get, $(g_{(x+y)}-g_x)x+(g_{(x+y)}-g_y)y=0$, which gives $g_{(x+y)}=g_x=g_y$. Thus, $g_x$ is same (say $\lambda$) for all mutually independent vectors, in particular for the basis $\{e_1,e_2,...,e_n\}.$
		
		Let $v\in F_q^n$. Then $v=\displaystyle\sum_{i=1}^{n}v_ie_i$, $v_i\in F_q$. Also, by Lemma \ref{4.3}, $T$ is $\delta$ semilinear. Thus, \[ T(v)=T(\displaystyle\sum_{i=1}^{n}v_ie_i)=\displaystyle\sum_{i=1}^{n}\delta(v_i)T(e_i)=\displaystyle\sum_{i=1}^{n}\delta(v_i)\lambda e_i=\lambda\displaystyle\sum_{i=1}^{n}\delta(v_i)e_i\tag{3}.\]
		Now, for $a\in F_q,\: T(ae_i)=\delta(a)T(e_i)=\delta(a)\lambda e_i.$ Also, $T(ae_i)=g_{ae_i}ae_i.$ Equating the two expressions we get, $\frac{\delta(a)}{a}=\frac{g_{ae_i}}{\lambda}$, thus, ${g_{ae_i}}$ is independent of $i$. Also, as $\frac{\delta(a)}{a}=\frac{\delta(b)}{b},\;\forall\;a,b\in F_q$, we have that $g_{ae_i}$ is independent of $a$ as well, and hence a constant. Thus, $\frac{\delta(a)}{a}=\mu $ for some $\mu\in F_q^*$. Putting $a=1$ gives $\mu=1$. Thus, $\delta$ is the identity automorphism. Also from (3), $T(v)=\lambda v$. Thus, $W=\{\lambda\}$ and $\mu=I_d$.
		
		Conversely, it is easy to check that if $W=\{\lambda\},\delta=I_d$ and $\mu=I_d$ then $C^{\perp_{W-\delta-\mu}}=C^{\perp_{Euclidean}}\;$ $\forall C $ .\end{proof}

	The ${W-\delta-\mu}$ dual of a linear code over $F_q$ can also be obtained explicitely from it's generator matrix. In the following proposition, we use the matrix representation from Remark \ref{3.5} for the $W-\delta-\mu$ product and use $B=D_WP$, with $D_W$ as the diagonal matrix corresponding to $W$ and $P$ the permutation matrix corresponding to $\mu$  .
	
	\begin{propo}\label{4.7} Let $C$ be a linear code over $F_q$ with a generator matrix $G$. Then $C^{\perp_{W-\delta-\mu}}=\; \Delta^{-1}(Kernel(B^TG^T))$.
	\end{propo}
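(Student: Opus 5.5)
The plan is to work entirely with the matrix form of the product from Remark \ref{3.5}, namely $<u,v>_{W-\delta-\mu}=uBΔ(v)^T$ with $B=D_WP$, and to reduce membership in $C^{\perp_{W-\delta-\mu}}$ to a linear system in the vector $\Delta(y)$. I read $\Delta$ as the coordinatewise map $\Delta:F_q^n\to F_q^n$, $\Delta(v)=(\delta(v_1),\ldots,\delta(v_n))$. I read $Kernel(B^TG^T)$ as the set of row vectors $z\in F_q^n$ with $zB^TG^T=0$. Here $B^TG^T$ is an $n\times k$ matrix, so this is its left null space.

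\textbf{Step 1.} Reduce to the rows of $G$. Let $g_1,\ldots,g_k$ be the rows of $G$. Since every $x\in C$ has the form $x=uG$ and the product is linear in the first coordinate (Lemma \ref{4.1}), we get $y\in C^{\perp_{W-\delta-\mu}}$ if and only if $<g_i,y>_{W-\delta-\mu}=0$ for all $i$. The converse direction is immediate, because $C$ contains each $g_i$.

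\textbf{Step 2.} Stack these $k$ scalar equations into one matrix identity. By Remark \ref{3.5}, $<g_i,y>_{W-\delta-\mu}=g_iB\Delta(y)^T$, so the $k$ conditions together read $GB\Delta(y)^T=0$, a $k\times 1$ zero column. Transposing gives the equivalent condition $\Delta(y)B^TG^T=0$. Thus $y\in C^{\perp_{W-\delta-\mu}}$ if and only if $\Delta(y)\in Kernel(B^TG^T)$.

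\textbf{Step 3.} Pass to the preimage. Since $\delta$ is a field automorphism, $\Delta$ is a bijection of $F_q^n$ with inverse $\Delta^{-1}(v)=(\delta^{-1}(v_1),\ldots,\delta^{-1}(v_n))$. It is additive and $\delta$-semilinear but not $F_q$-linear. Step 2 then says exactly that $C^{\perp_{W-\delta-\mu}}=\Delta^{-1}(Kernel(B^TG^T))$.

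The computation itself is routine. The main care point is notational: the kernel must be taken on the correct side (row vectors acting on the left of $B^TG^T$, or equivalently column vectors in the right kernel of $GB$), and $\Delta^{-1}$ must be understood as a set-theoretic preimage under a semilinear bijection rather than a linear map. As a consistency check, I would note that $Kernel(B^TG^T)$ has dimension $n-k$, because $B$ is invertible and $G$ has rank $k$. Since $\Delta$ is a semilinear bijection, it maps subspaces to subspaces of the same dimension. Hence the preimage has dimension $n-k$, in agreement with Proposition \ref{4.4}.
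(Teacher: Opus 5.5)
Your proof is correct and follows the paper's argument. Both use the matrix form $<u,v>_{W-\delta-\mu}=uB[\Delta(v)]^T$ from Remark~\ref{3.5}, reduce orthogonality to all of $C$ to the single condition $GB[\Delta(v)]^T=0$, transpose, and pull back under the bijection $\Delta$. The differences are cosmetic: you reduce to the rows of $G$ where the paper quantifies over all $x\in F_q^k$, and you spell out the left-kernel convention and the dimension check that the paper leaves implicit.
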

	
	\begin{proof} By Remark \ref{3.5}, $<u,v>_{W-\delta-\mu}=uD_WP[\Delta(v)]^T$, where $u=[u_1\:u_2...u_n]$ and $v=[v_1\:v_2...v_n]$ are in $F_q^n$.  Also, by definition of a generator matrix, $u\in C\iff u=xG$ for some $x\in F_q^k$. Now, 
		\[
		\begin{aligned}
			v\in\; C^{\perp_{W-\delta-\mu}}\;&\iff\; <u,v>_{W-\delta-\mu}=0\; \forall u=[u_1\:u_2...u_n]\in C\\
			&\iff uD_WP[\Delta(v)]^T=0\;\forall u=[u_1\:u_2...u_n]\in C\\
			&\iff xGB[\Delta(v)]^T=0\;\forall\; x=[x_1\:x_2...x_k]\in F_q^k\\
			&\iff GB[\Delta(v)]^T=0 \\
			&\iff [\Delta(v)](GB)^T=0\\
			&\iff \Delta(v)\;\in Kernel(GB)^T\\
			&\iff v\in\; \Delta^{-1}(Kernel(B^TG^T)).
		\end{aligned}\]
		Therefore, $C^{\perp_{W-\delta-\mu}}\;=\;\Delta^{-1}(Kernel(B^TG^T))$.\end{proof}
	
	\begin{exl}\label{dual example} Consider $F_4=\{0,1,\alpha,\alpha^2\},$ where $\alpha^2=\alpha+1$, let $C$ be the linear code generated by the matrix, $G\:=\: \begin{bmatrix}
			1 & 0 & \alpha\\
			0 & 1 & \alpha^2
		\end{bmatrix}$, $W\:=\:\{1,\alpha,\alpha^2\},\; \delta(x)\;=\;x^2,\;\mu\;=\;(1\;3),\;u= (u_1,u_2,u_3),\; v=(v_1,v_2,v_3)$. Then $<u,v>_{W-\delta-\mu}\;=u_1v_3^2+\alpha u_2v_2^2+\alpha^2u_3v_1^2$. 
		Consider, \[
		\begin{aligned}
			<u,v>_{W-\delta-\mu}=0\;\forall u\in C &\iff <(1,0,\alpha),v>=0\; and\;<(0,1,\alpha^2),v>=0\\
			&\iff v_3^2+v_1^2=0\; and\; v_1^2+v_2^2=0\\
			&\iff v_1=v_2=v_3\; in\; F_4.
		\end{aligned}\]
		Thus the $W-\delta-\mu$ dual code is generated by $H\;=\: [1\;1\; 1].$
	\end{exl}
	
	\subsection{Self Orthogonality}
	\begin{dfn} A linear code $C$ over $F_q$ is $W-\delta-\mu$ self orthogonal (resp. $W-\delta-\mu$ self dual) if $C\;\subseteq C^{\perp_{W-\delta-\mu}}$ (resp. $C=C^{\perp_{W-\delta-\mu}}$)
	\end{dfn}
	We give a characterisation of $W-\delta-\mu$ self duality in terms of generator matrix of a linear code. In the following for a given matrix $G$, by $T(G)$ we mean a matrix obtained by applying $T$ on every row of $G$.
	\begin{thm}\label{dual matrix} If $G$ is a generator matrix of a linear code $C$, then $C$ is $W-\delta-\mu$ self dual if and only if $G(T(G))^T=0$.\end{thm}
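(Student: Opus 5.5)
The plan is to reduce everything to the identity $<u,v>_{W-\delta-\mu}=<u,T(v)>_{Euclidean}=u\,(T(v))^T$ from the proof of Lemma \ref{4.3}. Here $T$ is the $\delta$-semilinear bijection $(T(v))_i=w_i\delta(v_{\mu(i)})$. This is the same as the matrix form $uD_WP[\Delta(v)]^T$ of Remark \ref{3.5}. Let $g_1,\dots,g_k$ be the rows of $G$. Then the $(i,j)$ entry of $G(T(G))^T$ is $g_i\,(T(g_j))^T=<g_i,g_j>_{W-\delta-\mu}$. So the matrix condition $G(T(G))^T=0$ says exactly that every pair of basis vectors of $C$ is $W-\delta-\mu$ orthogonal, in the given order.

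The first step is to show that $C\subseteq C^{\perp_{W-\delta-\mu}}$ if and only if $<g_i,g_j>_{W-\delta-\mu}=0$ for all $i,j$. The forward direction is immediate since each $g_i\in C$. For the converse, take arbitrary codewords $x=\sum_i a_ig_i$ and $y=\sum_j b_jg_j$. By Lemma \ref{4.1}, linearity in the first slot and $\delta$-semilinearity in the second give
\[<x,y>_{W-\delta-\mu}=\sum_{i,j}a_i\,\delta(b_j)\,<g_i,g_j>_{W-\delta-\mu}.\]
Equivalently, $T(y)=\sum_j\delta(b_j)T(g_j)$ because $T$ is $\delta$-semilinear, so $x\,(T(y))^T=aG(T(G))^T\delta(b)^T$ with $\delta$ applied entrywise. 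Hence the matrix condition forces $<x,y>_{W-\delta-\mu}=0$ for all $x,y\in C$. The product is not symmetric (Example \ref{product example}), but this causes no trouble, because the matrix condition already covers both orders $(i,j)$ and $(j,i)$.

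The main obstacle is the word ``self dual'' in the statement. The condition $G(T(G))^T=0$ only gives $C\subseteq C^{\perp_{W-\delta-\mu}}$. Equality needs in addition $\dim C=\dim C^{\perp_{W-\delta-\mu}}$, and by Proposition \ref{4.4} this means $n=2k$. For example, the zero-dimensional situation or a one-dimensional isotropic code in $F_q^4$ satisfies the matrix condition without being self dual. I would therefore prove the statement as a characterization of $W-\delta-\mu$ self orthogonality, which matches the subsection title and the description in the introduction. I would then add the corollary that $C$ is $W-\delta-\mu$ self dual if and only if $G(T(G))^T=0$ and $n=2k$. The reverse inclusion in that corollary follows from the inclusion plus the dimension count of Proposition \ref{4.4}.
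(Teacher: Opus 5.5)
Your proposal is correct, and your objection is valid: with ``self dual'' as written, the ``if'' direction of the theorem is false. For example, over $F_2$ take $W=\{1,1,1,1\}$, $\delta=I_d$, $\mu=I_d$. The code generated by $G=[1\;1\;0\;0]$ satisfies $G(T(G))^T=GG^T=0$, but $\dim C^{\perp_{W-\delta-\mu}}=3\neq 1$.

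The paper's own proof has exactly the gap you anticipate. It uses Lemma \ref{4.3} to rewrite self-duality as $C^{\perp_{Euclidean}}=T(C)$. It then checks that the rows of $T(G)$ are linearly independent, so they generate $T(C)$, and invokes Lemma \ref{GH theorem}. That lemma lets one pass from $G(T(G))^T=0$ to ``$T(G)$ is a parity check matrix of $C$'' only when $T(G)$ has $n-k$ rows, i.e.\ $n=2k$. Without that, $G(T(G))^T=0$ only gives $T(C)\subseteq C^{\perp_{Euclidean}}$, which is equivalent to $C\subseteq C^{\perp_{W-\delta-\mu}}$.

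So the correct statement is the one you propose: self-orthogonality, or self-duality together with $n=2k$. This matches the paper's introduction, which describes this result as a characterization of self orthogonality. It also matches the paper's later proposition characterizing self-orthogonality by $GB\Delta(G)^T=0$, since $T(G)^T=B\Delta(G)^T$.

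Your route also differs from the paper's. You expand $\langle x,y\rangle_{W-\delta-\mu}=a\,G(T(G))^T\,\delta(b)^T$ directly from sesquilinearity. The paper instead passes to the Euclidean dual through $T$ and uses the generator/parity-check criterion. Your argument is more elementary and shows plainly that the matrix condition captures exactly self-orthogonality, with both orders $(i,j)$ and $(j,i)$ covered, so non-symmetry causes no trouble. Once $n=2k$ is assumed, the paper's detour through $T(C)$ lands directly on $T(C)=C^{\perp_{Euclidean}}$. You reach the same conclusion from the inclusion plus the dimension count of Proposition \ref{4.4}.
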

	
	\begin{proof}By Lemma \ref{4.3}, $C^{\perp_{W-\delta-\mu}}=T^{-1}(C^{\perp_{Euclidean}})$, where $T:F_q^n\longrightarrow F_q^n$ is given by $(T(y))_i=w_i\delta(y_{\mu(i)})$. Thus, $C$ is $W-\delta-\mu$ self dual if and only if $T^{-1}(C^{\perp_{Euclidean}})= C$ if and only if $(C^{\perp_{Euclidean}})= T(C).$ Equivalently, generator matrix of $T(C)$ is a parity check matrix of $C$. Since $T$ is bijective the subspaces $C$ and $T(C)$ have same dimension. If $G=\begin{bmatrix}
			g^{(1)}\\
			g^{(2)}\\
			\vdots\\
			g^{(k)}
		\end{bmatrix}$, then the set $\{g^{(1)},g^{(2)},...,g^{(k)}\}$ is a basis of $C$. Consider the linear combination $\sum_{i=1}^{k}\alpha_iT(g^{(i)})=0$. Using the $\delta-$ semilinearity of $T$, we get,  $T(\sum_{i=1}^{k}\delta^{(-1)}(\alpha_i)(g^{(i)}))=0$. Thus, $\sum_{i=1}^{k}\delta^{(-1)}(\alpha_i)(g^{(i)})=0$ as $T$ is a one one mapping. This implies $\delta^{(-1)}(\alpha_i)$ is zero for each $i$ as $\{g^{(1)},g^{(2)},...,g^{(k)}\}$ is a basis of $C$. Since $\delta$ is a field automorphism, $\alpha_i=0$ for each $i$. Therefore, the set $\{T(g^{(1)}),T(g^{(2)}),...,T(g^{(k)})\}\subseteq T(C)$ is linearly independent and hence forms a basis of $T(C)$. Therefore, if $G$ generates $C$ then $T(G)$ generates the linear code $T(C)$. Now the rest of the proof follows from Lemma \ref{GH theorem}.\end{proof}
	
	Next, we give a sufficient condition for the existence of a $W$ for an arbitrary $\delta$ and $\mu$ which makes a given linear code $C$, $W-\delta-\mu$ self orthogoanl.
	
	\begin{propo}\label{Proposition:4.9} Let $C$ be a linear code of length $n$ over $F_q$ of dimension $k$. If $q>n>k^2$ then $\forall\; \mu\in\;S_n$ and $\delta\in\: Aut(F_q), $ there exists $\: W=\{w_1,w_2,\ldots,w_n\}\subseteq\;F_q^* $ such that $C$ is $W-\delta-\mu$ self orthogonal.\end{propo}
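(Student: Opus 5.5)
I would reduce $W-\delta-\mu$ self orthogonality of $C$ to a homogeneous linear system in the unknown weights, and then find a solution with all coordinates nonzero. Let $G=[g_{ij}]$ be a $k\times n$ generator matrix of $C$ with rows $g^{(1)},\dots,g^{(k)}$. By Lemma \ref{4.1}, $<,>_{W-\delta-\mu}$ is linear in the first coordinate and $\delta$-semilinear in the second. Hence $C\subseteq C^{\perp_{W-\delta-\mu}}$ if and only if $<g^{(i)},g^{(j)}>_{W-\delta-\mu}=0$ for all $1\le i,j\le k$. Equivalently, in the notation of Remark \ref{3.5}, $GD_WP[\Delta(G)]^T=0$. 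Written out, this is
\[
\sum_{l=1}^{n} w_l\, g_{il}\,\delta(g_{j\mu(l)})=0,\qquad 1\le i,j\le k .
\]
Fixing $G$, $\delta$ and $\mu$, these are $k^2$ homogeneous linear equations over $F_q$ in the $n$ unknowns $w_1,\dots,w_n$.

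The first step is a dimension count. Let $V\subseteq F_q^n$ be the solution space. Since there are at most $k^2$ independent equations, $\dim V\ge n-k^2\ge 1$ by the hypothesis $n>k^2$, so $V$ contains a nonzero vector. The second step is to upgrade this to a vector of $V$ with all coordinates nonzero, since $W\subseteq F_q^*$ is required. For each $l$, let $H_l=\{w\in V: w_l=0\}$. If every $H_l$ is a proper subspace of $V$, then with $d=\dim V$ we get
\[
\Bigl|\bigcup_{l=1}^n H_l\Bigr|\le n\,q^{d-1}<q\cdot q^{d-1}=|V|,
\]
using $q>n$. So some $w\in V$ lies outside every $H_l$, and $W=\{w_1,\dots,w_n\}$ is the required set.

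The main obstacle is guaranteeing that no coordinate functional $w\mapsto w_l$ vanishes identically on $V$. The dimension count does not give this, and I expect it to fail in general. For example, take $C=<e_1>$ with $k=1$, $n=2$, $q>2$, $\mu=I_d$ and $\delta=I_d$. The only equation is $w_1=0$, so no admissible $W$ exists even though $q>n>k^2$.

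My plan is therefore to identify the extra hypothesis under which $H_l\neq V$ for every $l$, and prove the statement under it. One option is to assume that the $k^2\times n$ coefficient matrix $M=[g_{il}\delta(g_{j\mu(l)})]$ has no column $l$ lying outside the column span of the other columns. Equivalently, deleting column $l$ does not lower the rank of $M$, so there is a solution of $Mw=0$ with $w_l=1$. Under such a condition the union bound above finishes the proof. Without it, the proposition as stated needs to be corrected, and I would present the counterexample alongside the corrected version.
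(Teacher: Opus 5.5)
Your analysis is correct, and your route is the same as the paper's. The paper sets up the same $k^2\times n$ system $Av^T=0$ in the weights, gets nullity $\nu\ge n-k^2\ge 1$, and finishes with the bound $nq^{\nu-1}<q^{\nu}$. The step you declined to take is exactly the one the paper takes without justification. It asserts that exactly $q^{\nu-1}$ vectors of $\mathrm{Kernel}(A)$ have $0$ in the $i$th place, which holds only if the functional $v\mapsto v_i$ does not vanish identically on the kernel.

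Your counterexample ($C=\langle e_1\rangle$, $n=2$, $q\ge 3$, $\mu=\delta=I_d$, forcing $w_1=0$) is valid, so the proposition is false as stated. More generally, whenever $e_l\in C$ and $\mu(l)=l$ one gets $\langle e_l,e_l\rangle_{W-\delta-\mu}=w_l\delta(1)=w_l\neq 0$, regardless of $n$ and $q$.

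The paper's own $F_4$ example with $G'$ already refutes the proposition. There $C=\langle(1,0,a)\rangle$ and $q=4>n=3>k^2=1$. For $\mu=(1\,2)$ and $g=(1,0,a)$ one gets $\langle g,g\rangle_{W-\delta-\mu}=a\,\delta(a)\,w_3\neq 0$ for every $W$ and $\delta$. The reported count of $36$ self-orthogonal cases comes entirely from $\mu\in\{I_d,(1\,3)\}$: $9$ choices of $W$ for each of the four pairs $(\delta,\mu)$.

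Your repair is sound. A kernel vector with $w_l\neq 0$ exists if and only if column $l$ of $M$ lies in the span of the other columns. So under your hypothesis every $H_l$ is a hyperplane of $V$, and the union bound with $q>n$ finishes the argument. The condition is also necessary, since otherwise every solution has $w_l=0$. Hence for $q>n$ it characterizes exactly when a suitable $W$ exists. It also makes the hypothesis $n>k^2$ redundant, because a solution with $w_1=1$ is automatically nonzero.
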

	
	\begin{proof} Let \[G=
		\begin{bmatrix}
			g^{(1)}\\
			g^{(2)}\\
			\vdots\\
			g^{(k)}
		\end{bmatrix}_{k\times n}
		\] be a generator matrix for $C$. For any vector $v=[v_1,v_2,...,v_n]\in F_q^n$, consider the following set of $k^2$ homogenous equations in $v_1,v_2,...,v_n$,
		\[\displaystyle\sum_{i=1}^{n}v_ig_i^{(r)}\delta(g_{\mu(i)}^{(s)})=0,\; 1\leq r,s\leq k. \tag{4}\]
		Equivalently, $Av^T=0$ where $A$ is a $k^2\times n$ matrix given by $A$ $=[g_i^{(r)}\delta(g^{(s)}_{\mu(i)})]_{1\leq r,s\leq k,1\leq i\leq n}$.
		The solution space of the system of equations (4) is given by the null space of the matrix $A$. If we denote the nullity of $A$ by $\nu$ and the rank of $A$ by $\rho$, then $n>k^2$ implies that $\nu=n-\rho\geq n-k^2\geq 1.$ Then there exists a $v^T\neq\:0$ such that $Av^T=0.$ Since, nullity of $A$ is $\nu$, there are $q^\nu$ vectors in $Kernel(A)$. So, the number of vectors in $Kernel(A)$ with 0 in the $i^{th}$ place is $q^{(\nu-1)}$. Thus, number of vectors with at least one zero coordinate in $Kernel(A)$ is $nq^{(\nu-1)}< q^\nu $ because $n<q$. Thus, there exists a $v^T= [v_1,v_2,\ldots,v_n]^T$ with all $v_i$'s non zero satisfying the system of equations (4).
		Therefore, there exists $W=\{w_1,w_2,...,w_n\}\subseteq F_q^*$, where for each $i\;w_i=v_i$ such that\[
			\displaystyle\sum_{i=1}^{n}w_ig_i^{(r)}\delta(g_{\mu(i)}^{(s)})=0,\; 1\leq r,s\leq k\]
			This implies that$	<g^{(r)},g^{(s)}>_{W-\delta-\mu}\;=0\;\forall\; 1\leq r,s\leq k.$ Hence, $C$ is $W-\delta-\mu$ self orthogonal.

		\end{proof}
	
	\begin{coro} If the $W-\delta-\mu$ product is symmetric for the given choice of $\delta$ and $\mu$, then the system of equations in $w_1,w_2,\ldots,w_n$ has at the most $\frac{k(k+1)}{2}$ linearly independent equations. Moreover, if we take $q>n>\frac{k(k+1)}{2}$ then there exists a set $W$ such that $C$ is $W-\delta-\mu$ self orthogonal.\end{coro}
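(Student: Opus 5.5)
The plan is to rerun the proof of Proposition \ref{Proposition:4.9} and observe that symmetry lets us discard about half of the $k^2$ equations in system (4). Fix a generator matrix $G$ of $C$ with rows $g^{(1)},\dots,g^{(k)}$. As before, $C$ is $W-\delta-\mu$ self orthogonal if and only if the vector $w=(w_1,\dots,w_n)$ satisfies
\[\sum_{i=1}^{n}w_i\,g^{(r)}_i\,\delta\big(g^{(s)}_{\mu(i)}\big)=0,\qquad 1\le r,s\le k.\]
Each left-hand side is a linear form $L_{r,s}(w)$ in $w$. These are the rows of the $k^2\times n$ matrix $A$ from Proposition \ref{Proposition:4.9}.

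\textbf{Step 1: $L_{r,s}=L_{s,r}$.} I read the symmetry hypothesis as: for the given $\delta$ and $\mu$, $<x,y>_{W-\delta-\mu}=<y,x>_{W-\delta-\mu}$ holds for all $x,y$ and for all admissible $W$. Applying this with $x=g^{(r)}$ and $y=g^{(s)}$ gives $L_{r,s}(w)=L_{s,r}(w)$ for every $w\in(F_q^*)^n$. To conclude that the two forms have the same coefficients, look at the difference form $\sum_i c_iw_i$, which vanishes on $(F_q^*)^n$. Fix all coordinates except $w_i$ and vary $w_i$ over two distinct nonzero values. This is possible because $q>n\ge 2$ forces $q\ge 3$. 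It gives $c_i=0$ for each $i$, so row $(r,s)$ of $A$ equals row $(s,r)$.

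\textbf{Step 2: rank bound.} After Step 1, the distinct rows of $A$ are among those indexed by $r\le s$. There are $k+\binom{k}{2}=\frac{k(k+1)}{2}$ such rows. Hence $\rho=\mathrm{rank}(A)\le \frac{k(k+1)}{2}$, which is the first assertion of the corollary. If $n>\frac{k(k+1)}{2}$, the nullity satisfies $\nu=n-\rho\ge n-\frac{k(k+1)}{2}\ge 1$.

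\textbf{Step 3: a kernel vector with no zero coordinate.} This follows the counting in Proposition \ref{Proposition:4.9}. For each $i$, the vectors of $\mathrm{Kernel}(A)$ with $i$-th coordinate $0$ form a subspace of size at most $q^{\nu-1}$. So at most $nq^{\nu-1}<q^\nu$ kernel vectors have some zero coordinate, using $n<q$. Therefore some $v\in\mathrm{Kernel}(A)$ has all coordinates nonzero. Setting $w_i=v_i$ gives the required $W\subseteq F_q^*$, and then $<g^{(r)},g^{(s)}>_{W-\delta-\mu}=0$ for all $r,s$. By linearity in the first argument and $\delta$-semilinearity in the second (Lemma \ref{4.1}), this yields $C\subseteq C^{\perp_{W-\delta-\mu}}$.

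\textbf{Main obstacle.} The delicate point is the bound of $q^{\nu-1}$ in Step 3. It needs the coordinate functional $v\mapsto v_i$ to be nonzero on $\mathrm{Kernel}(A)$; if it vanishes identically there, that slice is the whole kernel and the count fails. The corollary as stated inherits this from Proposition \ref{Proposition:4.9}, so I would cite that argument verbatim. I would add the remark that the count, and hence the conclusion, is valid whenever no coordinate vanishes identically on the kernel. A secondary point is the interpretation of ``symmetric'' used in Step 1. If symmetry is assumed only for one fixed $W$, one should instead assume directly that the coefficient rows satisfy $g^{(r)}_i\delta(g^{(s)}_{\mu(i)})=g^{(s)}_i\delta(g^{(r)}_{\mu(i)})$, and the rest of the argument is unchanged.
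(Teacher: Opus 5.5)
\textbf{Gap in Step 3: the existence assertion is false.} Your Steps 1 and 2 are fine and prove the first assertion, the rank bound. The gap is Step 3, and it cannot be closed. The point you flag as delicate is exactly where the statement fails. Citing the counting of Proposition~\ref{Proposition:4.9} verbatim does not help, because that proposition has the same defect. The set $\{v\in\mathrm{Kernel}(A): v_i=0\}$ has $q^{\nu-1}$ elements only if $v\mapsto v_i$ is not identically zero on $\mathrm{Kernel}(A)$, and nothing in the hypotheses rules that out.

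Concretely, take $q=3$, $n=2$, $\delta=\mu=\mathrm{id}$, so the product $\sum_i w_ix_iy_i$ is symmetric for every $W$. Take $C=\langle(1,0)\rangle$. Then $k=1$ and $q>n>\frac{k(k+1)}{2}=1$. But $A=[\,1\;\;0\,]$, so $\mathrm{Kernel}(A)=\{(0,t):t\in F_3\}$. Hence $\langle(1,0),(1,0)\rangle_{W-\delta-\mu}=w_1\neq0$ for every $W\subseteq F_3^*$. With $k^2=1$, the same example also refutes Proposition~\ref{Proposition:4.9}.

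More generally, if $\mu=\mathrm{id}$ and $C$ contains a weight-one vector $ce_i$, then $\langle ce_i,ce_i\rangle_{W-\delta-\mu}=w_ic\,\delta(c)\neq0$. So no $W$ works, for any $q>n$.

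This lies squarely inside the corollary's scope. Under your own reading of symmetry (for all $W$, with $q\ge3$), you get the coefficientwise identity $x_i\delta(y_{\mu(i)})=y_i\delta(x_{\mu(i)})$. Taking $x=e_i$, $y=e_{\mu(i)}$ forces $\mu=\mathrm{id}$. Then $y=e_i$ forces $\delta=\mathrm{id}$. So the corollary is really a statement about the weighted product, which is exactly where these counterexamples live.

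\textbf{What can be salvaged.} The second assertion is false as stated, and the remark you propose to add is a missing hypothesis, not a side comment. The condition $n>\frac{k(k+1)}{2}$ only guarantees $\mathrm{Kernel}(A)\neq0$. The correct statement is: for $n\le q$, a suitable $W$ exists if and only if no coordinate functional $v\mapsto v_i$ vanishes identically on $\mathrm{Kernel}(A)$. Necessity is clear. Sufficiency holds because $n\le q$ proper subspaces cannot cover $\mathrm{Kernel}(A)$. Your argument legitimately establishes the rank bound together with this conditional existence result, not the corollary as written.
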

	
	\begin{exl} Consider, $k=2, n=5, \delta= I_d$ in $F_7$ and $ \mu$ as identity permutaion. Then, $\frac{k(k+1)}{2}=3<5<7.$ Thus, by above proposition, for every linear code $C$ over $F_7$ of length 5 and dimension 2, there exists $\;\{w_1,w_2,w_3,w_4,w_5\}\subseteq F_7^*$ such that $C$ is $W-\delta-\mu$ self orthogonal.
		For instance, if we take $C$ to be the linear code generated by the matrix $G=\begin{bmatrix}
			1 & 0 & 1 & 1 & 2\\
			0 & 1 & 2 & 3 & 4
		\end{bmatrix}$ then we get the following system of linear equations from the $W-\delta-\mu$ self orthogonality condition:
		\[\begin{aligned}
			w_1+w_3+w_4+4w_5=0\\
			w_2+4w_3+2w_4+2w_5=0\\
			2w_3+3w_4+w_5=0
		\end{aligned}
		\]
		Clearly, $W=\{1,1,1,2,6\}$ is one such solution.\end{exl}
		\begin{exl}\label{Better example} Let $G=
			\begin{bmatrix}
				1&0&a\\
				0&1&a+1
			\end{bmatrix}
			$ be a generator matrix over the finite field $F_4$, with $n=3$, $k=2$, $q=4$ 
			We observe that for this generator matrix, the condition of the Proposititon \ref{Proposition:4.9}, $q>n>k^2$ is not satisfied. By SageMath computations \cite{Sage} on $G$, we obtain the following informations with respect to $W-\delta-\mu$ products. 
			
							\begin{center}
					\renewcommand{\arraystretch}{1.3}
					\begin{tabular}{|c|c|}
						\hline
						Total $W-\delta-\mu$ products & $324$\\
						\hline
						$W-\delta-\mu$ LCD Codes & $252$\\
						\hline
						$W-\delta-\mu$ Self Orthogonal Codes & $0$\\
						\hline
						Total $\delta$ products & $2$\\
						\hline
						$\delta$ LCD codes & $1$\\
						\hline
						$\delta$ self orthogonal codes & $0$\\
						\hline
					\end{tabular}
				\end{center}
				
				\[
				\begin{aligned}
					\text{Percentage of LCD Codes}
					&=77.78\%,\\
					\text{Percentage of Self Orthogonal Codes}
					&=0\%.					
				\end{aligned}
				\]
			\end{exl}
				\vspace{2ex}
			 Thus, there are no $W-\delta-\mu$ self orthogonal codes in this case.
			 \begin{exl}
			 Consider the following generator matrix over $F_4$.
				\[
				G'=
				\begin{bmatrix}
					1&0&a\\
					a&0&a+1
				\end{bmatrix}
				\]
				Then, we observe that $n=3$, $q=4$ and $k=1$ which satisfy the conditions of the Proposition \ref{Proposition:4.9}, that is, $q>n>k^2$. By Sagemath computations, we get the following informations with respect to $W-\delta-\mu$ products.
				\begin{center}
					\renewcommand{\arraystretch}{1.3}
					\begin{tabular}{|c|c|}
						\hline
						Total $W-\delta-\mu$ products & $324$\\
						\hline
						$W-\delta-\mu$ LCD Codes & $288$\\
						\hline
						$W-\delta-\mu$ Self Orthogonal Codes & $36$\\
						\hline
						Total $\delta$ products & $2$\\
						\hline
						$\delta$ LCD codes & $1$\\
						\hline
						$\delta$ self orthogonal codes & $1$\\
						\hline
					\end{tabular}
				\end{center}
				
				\[
				\begin{aligned}
					\text{Percentage of $W-\delta-\mu$ LCD Codes}
					&=88.89\%,\\
					\text{Percentage of $W-\delta-\mu$ Self Orthogonal Codes}
					&=11.11\%.
				\end{aligned}
				\]
				We find that there are 36 $W-\delta-\mu$ self orthogonal codes in this case, validating Proposition \ref{Proposition:4.9}.
			\end{exl}

	Note that in case of $\delta$- product over $F_4$ we have only two cases, that is, $\delta$ is either identity map or frobenius map. However, for a code of length $3$ over $F_4$ there are 324 distinct choices for $W-\delta-\mu$ products with many of them yielding $W-\delta-\mu$ LCD and $W-\delta-\mu$ self orthogonal codes.
	
	The condition for self duality can also be given in terms of the matrix product representation associated with the $W-\delta-\mu$ product.  If $G=[g_{ij}]$ then we denote the matrix $[\delta(g_{ij})]$ by $\Delta(G)$. We observe that if $x\in F_q^k$ then $\Delta(xG)=\Delta(x)\Delta(G)$.
	
	\begin{propo} Let $C$ be a linear code over $F_q$ with a generator matrix $G$. Then $C$ is $W-\delta-\mu$ self orthogonal if and only if $GB\Delta(G)^T=0$. Additionally, if $Rank(G)=Nullity(B^TG^T)$ then $C$ is $W-\delta-\mu$ self dual.
	\end{propo}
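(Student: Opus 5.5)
The plan is to use the matrix form of the product from Remark \ref{3.5}, namely $\langle u,v\rangle_{W-\delta-\mu}=uB[\Delta(v)]^T$ with $B=D_WP$. Both claims then reduce to bilinear-algebra facts about the generator matrix $G$ with rows $g^{(1)},\dots,g^{(k)}$.

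\emph{Self orthogonality.} By definition, $C\subseteq C^{\perp_{W-\delta-\mu}}$ means $\langle u,v\rangle_{W-\delta-\mu}=0$ for all $u,v\in C$. First we reduce to the rows of $G$. Write $u=xG$ and $v=yG$ with $x,y\in F_q^k$. Lemma \ref{4.1} gives linearity in the first slot and $\delta$-semilinearity in the second, so
\[
\langle xG,yG\rangle_{W-\delta-\mu}=\sum_{r,s}x_r\,\delta(y_s)\,\langle g^{(r)},g^{(s)}\rangle_{W-\delta-\mu}.
\]
Hence self orthogonality is equivalent to $\langle g^{(r)},g^{(s)}\rangle_{W-\delta-\mu}=0$ for all $1\le r,s\le k$. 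The $(r,s)$ entry of $GB\Delta(G)^T$ is $g^{(r)}B[\Delta(g^{(s)})]^T=\langle g^{(r)},g^{(s)}\rangle_{W-\delta-\mu}$, so this condition says exactly $GB\Delta(G)^T=0$. Alternatively, one can use $\Delta(yG)=\Delta(y)\Delta(G)$ to write $\langle xG,yG\rangle_{W-\delta-\mu}=x\,GB\Delta(G)^T\,\Delta(y)^T$. Since $\Delta$ is a bijection on $F_q^k$, this vanishes for all $x,y$ if and only if the middle matrix is zero.

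\emph{Self duality.} Assume $GB\Delta(G)^T=0$, so $C\subseteq C^{\perp_{W-\delta-\mu}}$. Proposition \ref{4.7} gives $C^{\perp_{W-\delta-\mu}}=\Delta^{-1}(\mathrm{Kernel}(B^TG^T))$. Because $\Delta$ is a bijection on $F_q^n$, this set has the same cardinality as $\mathrm{Kernel}(B^TG^T)$. Here the kernel is read as the space of row vectors $z$ with $zB^TG^T=0$, equivalently $GBz^T=0$, i.e.\ the left null space, which is the reading used in the proof of Proposition \ref{4.7}. Its dimension is $\mathrm{Nullity}(B^TG^T)$. Applying $\Delta^{-1}$ (which is $\delta^{-1}$-semilinear) carries a subspace to a subspace of the same dimension, so $\dim C^{\perp_{W-\delta-\mu}}=\mathrm{Nullity}(B^TG^T)$. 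The hypothesis $\mathrm{Rank}(G)=\mathrm{Nullity}(B^TG^T)$ then gives $\dim C=\dim C^{\perp_{W-\delta-\mu}}$. Together with the inclusion $C\subseteq C^{\perp_{W-\delta-\mu}}$, this forces $C=C^{\perp_{W-\delta-\mu}}$.

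The only delicate point is this dimension bookkeeping, specifically fixing the orientation of the kernel so that its nullity means $n-\mathrm{rank}(GB)$. Since $B$ is invertible, that nullity is $n-k$. This also shows the extra hypothesis amounts to $k=n-k$, which is consistent with Proposition \ref{4.4}. I would state this remark explicitly so the counting is transparent.
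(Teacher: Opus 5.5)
Your proof is correct and is essentially the paper's argument. The paper likewise reduces self orthogonality to $\Delta(G)B^TG^T=0$ (equivalently $GB\Delta(G)^T=0$) using $\Delta(xG)=\Delta(x)\Delta(G)$, routing this through Proposition~\ref{4.7} rather than your direct Gram-matrix expansion over the rows of $G$, and it proves the self-dual part by the same dimension count; your explicit reading of $\mathrm{Kernel}(B^TG^T)$ as the left null space, with nullity $n-k$ so that the extra hypothesis just says $n=2k$, makes precise a step the paper states loosely.
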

	
	\begin{proof} Consider, $C$ is $W \delta\mu$ self orthogonal i.e. \[
		\begin{aligned}
		    C\subseteq C^{\perp_{W-\delta-\mu}}
			&\iff C\subseteq \Delta^{-1}(Kernel(B^TG^T)),\;by\; Proposition \;\ref{4.7}\\
			&\iff \Delta(C)\subseteq (Kernel(B^TG^T))\\
			&\iff \Delta(u)B^TG^T=0\:\forall\:u\in C\\
			&\iff \Delta(xG)(B^TG^T)=0\:\forall x\in F_q^k\\
			&\iff \Delta(x)\Delta(G)B^TG^T=0\:\forall\:x\in F_q^k\\
			&\iff x\Delta(G)B^TG^T=0\:\forall\:x\in F_q^k\\
			&\iff \Delta(G)B^TG^T=0\\
			&\iff GB[\Delta(G)]^T=0.
		\end{aligned}\]Further, let $Rank(G)=Nullity(B^TG^T).$ Since,  $\Delta$ is bijective, we get
		$Rank(G)=Dim(C)$  $=Nullity(\Delta^{-1}(B^TG^T))=Dim(\Delta^{-1}(Kernel(B^TG^T))).$ Thus, $C\subseteq \Delta^{-1}(Kernel(B^TG^T))$ implies that $ C=\Delta^{-1}(Kernel(B^TG^T))=C^{\perp_{W-\delta-\mu}}.$ Hence, $C$ is $W-\delta-\mu$ self dual.
	\end{proof}
	
	Now, we give a relation between self orthogonalities and self dualities in case of usual product and $W-\delta-\mu$ product. For a linear code $C$ over $F_q$, $W=\{w_1,w_2,...,w_n\}\subseteq F_q^*$, $\delta\in Aut(F_q)$, $\mu\in S_n$ and $u= (u_i)_n\in C$ we denote, $Wu=(w_1u_1,w_2u_2,\ldots,w_nu_n)$, $WC=\{Wu:u\in C\}$,   $\mu(u)=(u_{\mu(1)},u_{\mu(2)},\ldots,u_{\mu(n)})$, $\mu(C)=\{\mu(u):u\in C\}$, $\Delta^{-1}(u)= (\delta^{-1}(u_1),\delta^{-1}(u_2),\ldots,\delta^{-1}(u_n))$ and $\Delta^{-1}(C)=\{\Delta^{-1}(u):u\in\:C\}.$ 
	\begin{propo}\label{SO and SD} A linear code $C$ over $F_q$ is $W-\delta-\mu$ self orthogonal if and only if $\mu(C)\subseteq\:\Delta^{-1}(WC)^{\perp_{Euclidean}}$. Further, $C$ is $W-\delta-\mu$ self dual if and only if $\mu(C)=\:\Delta^{-1}(WC)^{\perp_{Euclidean}}$.  \end{propo}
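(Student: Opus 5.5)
The plan is to unwind the $W-\delta-\mu$ product into a Euclidean product between two transformed vectors. Then self orthogonality becomes a Euclidean orthogonality statement between the codes $\mu(C)$ and $\Delta^{-1}(WC)$. I read $\Delta^{-1}(WC)^{\perp_{Euclidean}}$ as $\bigl(\Delta^{-1}(WC)\bigr)^{\perp_{Euclidean}}$.

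The key identity is as follows. For $u,v\in F_q^n$,
\[
\delta^{-1}\bigl(<u,v>_{W-\delta-\mu}\bigr)=\sum_{i=1}^{n}\delta^{-1}(w_iu_i)\,v_{\mu(i)}=<\Delta^{-1}(Wu),\mu(v)>_{Euclidean}.
\]
It holds because $\delta^{-1}$ is a field automorphism, so it is additive and multiplicative and $\delta^{-1}(\delta(v_{\mu(i)}))=v_{\mu(i)}$. Since $\delta^{-1}$ is a bijection fixing $0$, we get $<u,v>_{W-\delta-\mu}=0$ if and only if $<\Delta^{-1}(Wu),\mu(v)>_{Euclidean}=0$.

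For the first assertion, $C$ is $W-\delta-\mu$ self orthogonal exactly when $<u,v>_{W-\delta-\mu}=0$ for all $u,v\in C$. By the identity, this says that every $\mu(v)$ with $v\in C$ is Euclidean orthogonal to every $\Delta^{-1}(Wu)$ with $u\in C$. That is precisely $\mu(C)\subseteq\bigl(\Delta^{-1}(WC)\bigr)^{\perp_{Euclidean}}$.

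For the self dual part, I first record that $\mu(C)$ and $\Delta^{-1}(WC)$ are linear codes of dimension $k=\dim C$. The maps $u\mapsto Wu$ and $u\mapsto \mu(u)$ are linear bijections of $F_q^n$. The map $\Delta^{-1}$ is a $\delta^{-1}$-semilinear bijection, so it sends subspaces to subspaces. Its dimension is preserved by the same independence argument used in the proof of Theorem \ref{dual matrix}. Hence $\bigl(\Delta^{-1}(WC)\bigr)^{\perp_{Euclidean}}$ has dimension $n-k$.

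Now, $C$ is self dual if and only if $C\subseteq C^{\perp_{W-\delta-\mu}}$ and $\dim C=\dim C^{\perp_{W-\delta-\mu}}$, i.e.\ $k=n-k$ by Proposition \ref{4.4}. Similarly, $\mu(C)=\bigl(\Delta^{-1}(WC)\bigr)^{\perp_{Euclidean}}$ holds if and only if the inclusion holds and $k=n-k$. Combining these with the first part gives the equivalence.

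The only step needing care is the dimension bookkeeping for the semilinear map $\Delta^{-1}$. Checking that $\Delta^{-1}(WC)$ is closed under scalar multiplication uses $\delta^{-1}(a x)=\delta^{-1}(a)\delta^{-1}(x)$ together with surjectivity of $\delta^{-1}$. Everything else is direct re-indexing.
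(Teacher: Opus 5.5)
Your argument is correct. It runs on the same identity as the paper's proof: applying $\delta^{-1}$ turns $\langle u,v\rangle_{W-\delta-\mu}$ into $\langle \Delta^{-1}(Wu),\mu(v)\rangle_{Euclidean}$.

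The difference is that you quantify over arbitrary pairs $u,v\in C$. The paper's chain instead starts from ``$C\subseteq C^{\perp_{W-\delta-\mu}}$ iff $\langle u,u\rangle_{W-\delta-\mu}=0$ for all $u\in C$''. It then ends by passing from ``$\langle \Delta^{-1}(Wu),\mu(u)\rangle_{Euclidean}=0$ for all $u\in C$'' to $\Delta^{-1}(WC)\perp\mu(C)$. Each of these two steps holds in only one direction. Already for the Euclidean product over $F_2$ (all $w_i=1$, $\delta=\mu=I_d$), the code $\{000,110,011,101\}$ has $\langle u,u\rangle=0$ for every codeword, yet $\langle 110,011\rangle=1$, so it is not self orthogonal. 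As written, then, the paper's chain proves neither implication, even though the statement is true. Your pairwise version makes every step a genuine equivalence.

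For the self dual part the paper only says ``analogously''. Your dimension count via Proposition~\ref{4.4} and the semilinearity of $\Delta^{-1}$ is valid, but you can drop it. Your identity says exactly that $v\in C^{\perp_{W-\delta-\mu}}$ iff $\mu(v)\in(\Delta^{-1}(WC))^{\perp_{Euclidean}}$, that is, $\mu(C^{\perp_{W-\delta-\mu}})=(\Delta^{-1}(WC))^{\perp_{Euclidean}}$. Since $v\mapsto\mu(v)$ is a bijection of $F_q^n$, both the inclusion and the equality statements follow at once.

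Your reading of the parenthesization is also harmless, because $(\Delta^{-1}(S))^{\perp_{Euclidean}}=\Delta^{-1}(S^{\perp_{Euclidean}})$ for every $S\subseteq F_q^n$.
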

	
	\begin{proof} We have,
		\[
		\begin{aligned}
			C\subseteq C^{\perp_{W-\delta-\mu}} &\iff\;<u,u>_{W-\delta-\mu}=0\;\forall\;u\in\:C\\
			&\iff \displaystyle\sum_{i=1}^{n}w_iu_i\delta(u_{\mu(i)})=0\\
			&\iff\displaystyle\sum_{i=1}^{n}\delta(\delta^{-1}(w_iu_i)u_{\mu(i)})=0\\
			&\iff \delta(\displaystyle\sum_{i=1}^{n}\delta^{-1}(w_iu_i)u_{\mu(i)})=0\\
			&\iff \displaystyle\sum_{i=1}^{n}\delta^{-1}(w_iu_i)u_{\mu(i)}=0\\
			&\iff <\Delta^{-1}(Wu),\mu(u)>_{Euclidean}=0\\
			&\iff \Delta^{-1}(WC)\perp_{Euclidean} \mu(C)\\
			&\iff \mu(C)\subseteq\Delta^{-1}(WC)^{\perp_{Euclidean}}.
		\end{aligned}
		\]
		Analogously, it follows that $C= C^{\perp_{W-\delta-\mu}}\iff \mu(C)=\Delta^{-1}(WC)^{\perp_{Euclidean}}$.\end{proof}
	
	\section{$W-\delta-\mu$ Dual containing and LCD codes}
	\begin{dfn}\label{dual containing} A linear code $C$ over $F_q$ is $W-\delta-\mu$ dual containing if $C^{\perp_{W-\delta-\mu}}\subseteq C$.\end{dfn}
	
	\begin{propo} If $C$ is $W-\delta-\mu$ dual containing code of length $n$ over $F_q$, then $|C|\ge q^{n/2}.$\end{propo}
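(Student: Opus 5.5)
The plan is to combine Proposition \ref{4.4} with the dual-containing hypothesis and compare cardinalities. Since $C$ is linear over $F_q$, every subspace appearing here has cardinality a power of $q$. By Definition \ref{dual containing}, the hypothesis says $C^{\perp_{W-\delta-\mu}}\subseteq C$.

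First, I will pass from the inclusion to inequalities. Because $C^{\perp_{W-\delta-\mu}}$ is a subspace of $C$ (it is linear by the earlier proposition), we get
\[
\dim(C^{\perp_{W-\delta-\mu}})\le \dim(C), \qquad\text{equivalently}\qquad |C^{\perp_{W-\delta-\mu}}|\le |C|.
\]

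Second, I will invoke Proposition \ref{4.4}(2), which gives $|C|\,|C^{\perp_{W-\delta-\mu}}|=q^n$. Substituting the inequality from the first step yields
\[
q^n=|C|\,|C^{\perp_{W-\delta-\mu}}|\le |C|^2,
\]
and taking square roots gives $|C|\ge q^{n/2}$. Equivalently, from part (1), $n=\dim C+\dim C^{\perp_{W-\delta-\mu}}\le 2\dim C$, so $\dim C\ge n/2$.

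There is no real obstacle here. The only point that needs care is that the dimension formula of Proposition \ref{4.4} holds for arbitrary $W,\delta,\mu$, even though the product is neither symmetric nor bilinear. That formula was already established through the semilinear bijection $T$ of Lemma \ref{4.3}, so it can be used directly.
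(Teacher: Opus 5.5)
Your proposal is correct and matches the paper's argument. The paper likewise combines $\dim(C)+\dim(C^{\perp_{W-\delta-\mu}})=n$ from Proposition \ref{4.4} with the inclusion $C^{\perp_{W-\delta-\mu}}\subseteq C$, and you only spell out the inequality $q^n\le |C|^2$ that the paper leaves implicit.
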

	
	\begin{proof} By Proposition \ref{4.4}, $dim(C)+dim(C^{\perp_{W-\delta-\mu}})=n$. The proof now follows directly from Definition \ref{dual containing}.
	\end{proof}
	
	\begin{propo}\label{5.3} A linear code $C$ over $F_q$ is a $W-\delta-\mu$ dual containing if and only if $C^{\perp_{Euclidean}}\subseteq T(C)$.\end{propo}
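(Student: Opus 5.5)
Since $T$ is a bijection of $F_q^n$, I will apply it to both sides of the inclusion that defines a dual containing code. Then Lemma~\ref{4.3} turns the $W-\delta-\mu$ dual into the Euclidean dual. Throughout, $T$ is the map from the proof of Lemma~\ref{4.3}, namely $(T(v))_i=w_i\delta(v_{\mu(i)})$, and we already know $C^{\perp_{W-\delta-\mu}}=T^{-1}(C^{\perp_{Euclidean}})$.

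First I record the fact about $T$ that the argument needs. For subsets $A,B\subseteq F_q^n$ we have $A\subseteq B\iff T(A)\subseteq T(B)$. The forward direction holds for any map. The backward direction uses injectivity: if $a\in A$, then $T(a)=T(b)$ for some $b\in B$, so $a=b\in B$. Surjectivity of $T$ gives $T(T^{-1}(X))=X$ for every $X\subseteq F_q^n$. Lemma~\ref{4.3} already established that $T$ is one-one and onto.

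The main argument is then a short chain:
\[
C^{\perp_{W-\delta-\mu}}\subseteq C \iff T^{-1}(C^{\perp_{Euclidean}})\subseteq C \iff T\bigl(T^{-1}(C^{\perp_{Euclidean}})\bigr)\subseteq T(C)\iff C^{\perp_{Euclidean}}\subseteq T(C).
\]
The first step is Lemma~\ref{4.3}. The second is the fact above with $A=T^{-1}(C^{\perp_{Euclidean}})$ and $B=C$. The third uses surjectivity of $T$. Together with Definition~\ref{dual containing}, this proves the proposition.

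No step here is a real obstacle. The one place that needs care is the reverse implication of the middle step, and it holds because $T$ is injective. One could also note that $T(C)$ is a linear code, since $T$ is $\delta$-semilinear, and that $\dim T(C)=\dim C$, as in the proof of Theorem~\ref{dual matrix}. This is not needed for the set-theoretic equivalence, but it shows that $T(C)$ is again a subspace, as in the analogous characterization of self duality.
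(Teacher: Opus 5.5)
Your proof is correct and follows the paper's argument: rewrite $C^{\perp_{W-\delta-\mu}}$ as $T^{-1}(C^{\perp_{Euclidean}})$ using Lemma~\ref{4.3}, then use that $T$ is a bijection to move $T$ across the inclusion. The only difference is that you justify the final equivalence explicitly (injectivity for $T(A)\subseteq T(B)\Rightarrow A\subseteq B$, surjectivity for $T(T^{-1}(X))=X$), which the paper compresses into a single ``Equivalently''.
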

	
	\begin{proof}By definition \ref{dual containing}, $C$ is dual containing if and only if $C^{\perp_{W-\delta-\mu}}\subseteq C$. This is if and only if $C^{\perp_{W-\delta-\mu}}=T^{-1}(C^{\perp_{Euclidean}})$, by Lemma \ref{4.3}, where $T:F_q^n\longrightarrow F_q^n$, given by $(T(y))_i=w_i\delta(y_{\mu(i)})$ is a $\delta$-semilinear bijection. Equivalently, $C^{\perp_{Euclidean}}\subseteq T(C)$.\end{proof}
		
	Below we give examples which show that a linear code can be $W-\delta-\mu$ dual containing while not Euclidean dual containing and vice versa.
	
	\begin{exl}
		
		Consider the binary field $F_2$ and a linear code $C$ of length $3$ over it with first coordinate zero for each codeword. Let $W=\{1,1,1\},\mu(i)=i+1,\delta=I_d$. Then $C^{\perp_{Euclidean}}=\{(0,\;0,\;0),(1,\;0,\;0)\}\nsubseteq C$ but $C^{\perp_{W-\delta-\mu}}=\{(0,\;0,\;0),(0,\;1,\;0)\}\subseteq C$.\end{exl}
	
	\begin{exl} Consider the length $4$ linear code $C=\{(a,a,b,b)\;:\;a,b\in F_2\}$ over the binary field $F_2$, then $C$ is Euclidean self dual and hence Euclidean dual containing. Let $W=\{1,1,1\},\delta=I_d,\mu(i)=i+1$. Then $C^{\perp_{W-\delta-\mu}}=\{(a,b,b,a)\;:\;a,b\in F_2\}\nsubseteq C$. For instance, $(1,0,0,1)\in C^{\perp_{W-\delta-\mu}}$ but $(1,0,0,1)\notin C$.
	\end{exl}
	
	As in the case of self orthogonal codes, we can talk of existence result for $W$ for a linear code $C$ to be $W-\delta-\mu$ dual containing.
	
	\begin{propo}\label{5.8}
		
		There exists a $W=\{w_1,w_2,...,w_n\}\subseteq F_q^*$ such that $C^{\perp_{W-\delta-\mu}}\subseteq C $ if and only if there exist codewords $g^{(1)},g^{(2)},...,g^{(s)}$ such that  for all $i=1,2,...,n,$ ${\delta(g_i^{(k)})}\neq 0$, where $s$ is the dimension of the Euclidean dual of $C$.
	\end{propo}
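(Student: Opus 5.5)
The plan is to reduce everything to Proposition \ref{5.3}, which says that $C^{\perp_{W-\delta-\mu}}\subseteq C$ holds if and only if $C^{\perp_{Euclidean}}\subseteq T(C)$. Here $T$ is the $\delta$-semilinear bijection of Lemma \ref{4.3}, given by $(T(y))_i=w_i\delta(y_{\mu(i)})$. Fix a basis $h^{(1)},\ldots,h^{(s)}$ of $C^{\perp_{Euclidean}}$, so that $s=n-\dim C$. Since $T(C)$ is a subspace (shown in the proof of Theorem \ref{dual matrix}), dual containment is equivalent to finding codewords $g^{(1)},\ldots,g^{(s)}\in C$ with $T(g^{(k)})=h^{(k)}$ for each $k$. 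Coordinatewise this reads
\[
h^{(k)}_i=w_i\,\delta\big(g^{(k)}_{\mu(i)}\big),\qquad 1\le i\le n,\ 1\le k\le s.
\]
Both directions of the proposition will be read off from this system. Throughout, the index condition in the statement is read after relabelling by $\mu$: since $\mu$ is a permutation, requiring $\delta(g^{(k)}_i)\neq0$ for all $i$ is the same as requiring $\delta(g^{(k)}_{\mu(i)})\neq0$ for all $i$.

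For the forward direction, assume such a $W$ exists. I will take $g^{(k)}=T^{-1}(h^{(k)})\in C$. Since $w_i\neq0$ and $\delta$ is injective, $\delta(g^{(k)}_{\mu(i)})\neq 0$ exactly when $h^{(k)}_i\neq0$. So the required non-vanishing reduces to the non-vanishing of the coordinates of the chosen dual basis. I would first replace $h^{(1)},\ldots,h^{(s)}$ by a basis of $C^{\perp_{Euclidean}}$ whose vectors have full support, adding suitable combinations of the other basis vectors. This is the same counting as in Proposition \ref{Proposition:4.9}, and it works whenever no coordinate vanishes identically on the dual.

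For the converse, assume we are given codewords $g^{(1)},\ldots,g^{(s)}$ with all $\delta(g^{(k)}_i)\neq0$. I will define the weights from the first codeword by
\[
w_i=h^{(1)}_i\,\delta\big(g^{(1)}_{\mu(i)}\big)^{-1}.
\]
This is well defined because the denominators are nonzero, and it gives $T(g^{(1)})=h^{(1)}$. The remaining $h^{(k)}$ must then be shown to lie in $T(C)$. For this I will use the freedom of replacing each $g^{(k)}$ by other codewords, namely scalar multiples and linear combinations within $C$, which is allowed because $T(C)$ is closed under linear combinations. Once all of $h^{(1)},\ldots,h^{(s)}$ lie in $T(C)$, Proposition \ref{5.3} gives $C^{\perp_{W-\delta-\mu}}\subseteq C$.

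I expect the main obstacle to be the consistency of this choice of $W$. A single weight vector $W$ must satisfy all $s$ coordinate equations simultaneously, and each $w_i$ must be nonzero, which forces $h^{(1)}_i\neq0$. The first thing I will try is to choose the dual basis with full support as in the forward direction. Then I will show that the ratio vectors $h^{(k)}_i/\delta(g^{(k)}_{\mu(i)})$ can be made equal for all $k$ by adjusting the $g^{(k)}$ inside $C$, using $\dim T(C)=\dim C\ge s$ together with the semilinearity of $T$. If this adjustment fails in general, the fallback is to extract from the argument exactly what extra compatibility among the $g^{(k)}$ is needed, and state it explicitly.
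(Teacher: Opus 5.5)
Neither half of your plan can be completed, because the statement is false as written, in both directions. The two places you flagged as delicate are exactly where it breaks.

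\textbf{The converse.} Nothing in the hypothesis gives $\dim C\ge s$, since the $g^{(k)}$ need not be distinct or independent. Full support also does not force the weights to be consistent across $k$. Take $C=\langle(1,1,1)\rangle\subseteq F_q^3$. Then $s=2$ and $g^{(1)}=g^{(2)}=(1,1,1)$ satisfy the hypothesis. But by Proposition~\ref{4.4}, $\dim C^{\perp_{W-\delta-\mu}}=2>\dim C$ for every $W,\delta,\mu$, so containment is impossible.

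Requiring independent codewords does not rescue it. Work over $F_2$ with $n=3$, $C=\{000,110,001,111\}$, $\delta=I_d$ and $\mu=(1\;3)$. The only possible weight set is $W=\{1,1,1\}$, and then $\langle x,y\rangle_{W-\delta-\mu}=x_1y_3+x_2y_2+x_3y_1$. This gives $C^{\perp_{W-\delta-\mu}}=\{000,011\}\not\subseteq C$, although $s=1$ and $111\in C$. So no adjustment of the $g^{(k)}$ inside $C$ can make the ratios $h^{(k)}_i/\delta(g^{(k)}_{\mu(i)})$ agree. That compatibility is a genuine extra condition, not a consequence of full support.

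\textbf{The forward direction.} Your step of passing to a full-support basis of $C^{\perp_{Euclidean}}$ needs two things. First, no coordinate may vanish identically on $C^{\perp_{Euclidean}}$, equivalently $e_j\notin C$ for all $j$. Second, the counting you borrow from Proposition~\ref{Proposition:4.9} needs $q>n$. Neither is assumed, and without them the conclusion itself fails, not just your method.

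The paper's own example after Proposition~\ref{5.3} shows this. For $C=\{0\}\times F_q^2$ and $\mu=(1\;2\;3)$, one gets $C^{\perp_{W-\delta-\mu}}=\{(0,y,0):y\in F_q\}\subseteq C$ for every $W$ and $\delta$. Yet no codeword has a nonzero first coordinate, so the required $g^{(1)}$ does not exist.

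\textbf{The paper's proof.} It hides exactly these issues. It sets $w_i=v^{(k)}_i/\delta(g^{(k)}_{\mu(i)})$, which depends on $k$ and can be $0$. In the converse it deduces $v^{(k)}_i\neq 0$ for all $i$ from $v^{(k)}\neq 0$.

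\textbf{What is actually true.} Your fallback of stating the compatibility condition explicitly is the right instinct. Unwinding Proposition~\ref{5.3}: let $v^{(1)},\dots,v^{(s)}$ be a basis of $C^{\perp_{Euclidean}}$. A suitable $W$ exists if and only if there are $g^{(1)},\dots,g^{(s)}\in C$ and a single vector $(w_1,\dots,w_n)\in(F_q^*)^n$ with $v^{(k)}_i=w_i\,\delta(g^{(k)}_{\mu(i)})$ for all $i,k$. The examples above show that full support of the $g^{(k)}$ is neither necessary nor sufficient for this.
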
 
	
	\begin{proof} Let $B=\{v^{(1)},v^{(2)},...,v^{(s)}\} $ be a basis of $C^{\perp_{Euclidean}}$. Suppose there exist codewords $\;g^{(1)},g^{(2)},...,g^{(s)}$ such that $\forall\;i=1,2,...,n$, ${\delta(g_i^{(k)})}\neq 0$, where $s$ is the dimension of the Euclidean dual of $C$. This implies ${\delta(g_{\mu^(i)}^{(k)})}\neq 0$. Then for any $i^{th}$ component of $v^{(k)}$ i.e. $v_i^{(k)}$, we take $w_i = \frac{v_i^{(k)}}{\delta(g_{\mu(i)}^{(k)})}$. Thus, for every $i$  $v_i^{(k)}=w_i\delta(g^{(k)}_{\mu(i)})=T(g^{(k)})_i.$ It follows that $v^{(k)}=T(g^{(k)})\in T(C)$. Hence, by linearity, $C^{\perp_{Euclidean}}\subseteq T(C)$. Thus, by Proposition \ref{5.3} $C^{\perp_{W-\delta-\mu}}\subseteq C$.
		
		Conversely, let $W=\{w_1,w_2,...,w_n\}\subseteq F_q^* $ such that $C^{\perp_{W-\delta-\mu}}\subseteq C$. Then by Proposition \ref{5.3}, $C^{\perp_{Euclidean}}\subseteq T(C)$. Since $B$ is a basis, $v^{(i)}$ is non zero. Also, since $B\subseteq C^{\perp_{Euclidean}}\subseteq T(C)\;, $ there exist $\; g^{(1)},g^{(2)},...,g^{(s)}$ such that $v^{(k)}_i =T(g^{(k)})_i=w_i\delta(g^{(k)}_{\mu(i)})\neq 0\;\forall\;i$. Hence, $\delta(g^{(k)}_i)\neq 0$.
	\end{proof}
	
	Another classical notion of relation between a linear code and its dual is that of linear complementarity, i.e., they intersect trivially having only zero in common. Such a code is called a LCD code, that is, linear code with complementary dual \cite{Massey}. In \cite{Sendrier}, it was shown that they meet the Gilbert Varshamov bound. In \cite{Yang} and \cite{Esmaeili} certain conditions were found for cyclic and some quasicyclic codes respectively to be LCD codes.
	
	\begin{dfn}\label{LCD} A linear code $C$ over $F_q$ is $W-\delta-\mu$ LCD code if $C\cap C^{\perp_{W-\delta-\mu}}=\{0\}$.
	\end{dfn}
	
	A linear code may not be LCD with respect to the Euclidean product, but with appropriate choices for $W,\;\delta$ and $\mu$, it can be made $W-\delta-\mu$ LCD. This is illustrated in the following example:
	
	\begin{exl}
		Consider the linear code $C=\{(0,0,0),(1,0,1)\}\subseteq\:F_2^3$. Then $<(1,0,1),(1,0,1)>_{Euclidean}=0$ and hence $(1,0,1)\in\:C^{\perp_{Euclidean}}$. Thus $C$ is not a LCD code with respect to the usual product. Now, consider $W=\{1,1,1\},\:\delta=I_d$ and $\mu=(1\;2\;3)\in\:S_3$. We have $<(1,0,1),(1,0,1)>_{W-\delta-\mu}=1\neq0$. Hence, $C$ is $W-\delta-\mu$ LCD. 
	\end{exl}
	
	Recall the transformation $T:F_q^n\longrightarrow F_q^n$ from Lemma \ref{4.3} with $(T(y))_i=w_i\delta(y_{\mu(i)})$ such that $C^{\perp_{W-\delta-\mu}}=T^{-1}(C^{\perp_{Euclidean}})$. So, we get the equivalent condition for $W-\delta-\mu$ LCD codes, that is, $C$ is a $W-\delta-\mu$ LCD code if and only if its Euclidean dual intersects trivially with its $T-$image. 
	
	The LCD codes have been characterised in terms of their generator matrices as follows:
	
	\begin{propo}\cite[Proposition 1]{Massey} If $G$ is a generator matrix for the (n, k) linear code $C$, then $C$ is an LCD code if and only if the $k\times k$ matrix $GG^T$ is nonsingular.
	\end{propo}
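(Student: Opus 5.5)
The plan is the classical Massey argument: turn the condition $C\cap C^{\perp_{Euclidean}}=\{0\}$ into the non-vanishing of the Gram determinant of the Euclidean product on a basis of $C$. Let $G$ have rows $g^{(1)},\dots,g^{(k)}$, a basis of $C$. Every $u\in C$ is $u=xG$ for a unique $x\in F_q^k$. By Lemma \ref{GH theorem}, applied with $G$ as a parity check matrix of $C^{\perp_{Euclidean}}$ (equivalently, straight from the definition of the Euclidean dual), we have $u\in C^{\perp_{Euclidean}}$ if and only if $uG^T=0$. Hence
\[ u=xG\in C\cap C^{\perp_{Euclidean}} \iff xGG^T=0 .\]

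For the forward direction, suppose $C$ is LCD and $xGG^T=0$ for some $x\in F_q^k$. Then $xG\in C\cap C^{\perp_{Euclidean}}=\{0\}$. Since the rows of $G$ are linearly independent, $xG=0$ forces $x=0$. So the left kernel of $GG^T$ is trivial, and the $k\times k$ matrix $GG^T$ is nonsingular.

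For the converse, suppose $GG^T$ is nonsingular and $u\in C\cap C^{\perp_{Euclidean}}$. Write $u=xG$. The displayed equivalence gives $xGG^T=0$, so $x=0$ and therefore $u=0$. Thus $C\cap C^{\perp_{Euclidean}}=\{0\}$.

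There is no real obstacle here. The only point needing care is the injectivity of $x\mapsto xG$, which comes from $G$ having full row rank $k$. It is used twice: to pass from $xG=0$ to $x=0$, and to identify the kernel of $GG^T$ with $C\cap C^{\perp_{Euclidean}}$.
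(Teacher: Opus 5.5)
Your proof is correct: the equivalence $xG\in C\cap C^{\perp_{Euclidean}}\iff xGG^T=0$, combined with the full row rank of $G$, is exactly Massey's argument. The paper only cites this result and does not prove it, but its proof of the generalization, Proposition \ref{LCD matrix}, runs along the same lines: it writes a nonzero codeword of $C\cap C^{\perp_{W-\delta-\mu}}$ as $aG$ with $a\neq 0$ to get a nontrivial left-kernel vector of $T(G)G^T$, and argues the converse similarly; with $T$ the identity this is what you wrote.
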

	
	We have a similar characterization of $W-\delta-\mu$ LCD codes. In the following, $\Delta^{-1}(a)=(\delta^{-1}(a_i))_k\:\forall a=(a_i)_k\in F_q^k.$
	\begin{propo}\label{LCD matrix} Let $G$ be a generator matrix of $C$. Then $C$ is $W-\delta-\mu$ LCD if and only if $M=G(T(G))^T$ is a non singular matrix.
	\end{propo}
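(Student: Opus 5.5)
The plan is to reduce the LCD condition to a statement about the $k\times k$ matrix $M=G(T(G))^T$ and the coordinate vector of a codeword, following Massey's argument for the Euclidean case. By Lemma \ref{4.3}, $\langle u,v\rangle_{W-\delta-\mu}=\langle u,T(v)\rangle_{Euclidean}$, where $T$ is the $\delta$-semilinear bijection with $(T(v))_i=w_i\delta(v_{\mu(i)})$. Let the rows of $G$ be $g^{(1)},\ldots,g^{(k)}$. Then the $(r,s)$ entry of $M$ is $\langle g^{(r)},T(g^{(s)})\rangle_{Euclidean}=\langle g^{(r)},g^{(s)}\rangle_{W-\delta-\mu}$. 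So $M$ is exactly the Gram matrix of the $W-\delta-\mu$ product on the basis of $C$.

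Next I describe $C\cap C^{\perp_{W-\delta-\mu}}$ in coordinates. Any $v\in C$ can be written $v=xG=\sum_s x_s g^{(s)}$ with $x\in F_q^k$, and $x$ is unique since the rows of $G$ are independent. Then $v\in C^{\perp_{W-\delta-\mu}}$ if and only if $\langle g^{(r)},v\rangle_{W-\delta-\mu}=0$ for all $r$, by linearity in the first coordinate (Lemma \ref{4.1}). By $\delta$-semilinearity in the second coordinate,
\[\langle g^{(r)},v\rangle_{W-\delta-\mu}=\sum_{s=1}^k \delta(x_s)\langle g^{(r)},g^{(s)}\rangle_{W-\delta-\mu}=\bigl(M\,\Delta(x)^T\bigr)_r .\]
Hence $v\in C\cap C^{\perp_{W-\delta-\mu}}$ if and only if $M\Delta(x)^T=0$.

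Now I conclude. If $M$ is nonsingular, then $\Delta(x)=0$, so $x=0$ because $\delta$ is injective, and therefore $v=0$; thus $C$ is $W-\delta-\mu$ LCD. Conversely, suppose $M$ is singular. Take a nonzero $y\in F_q^k$ with $My^T=0$ and set $x=\Delta^{-1}(y)$, which is nonzero since $\delta$ is a bijection. Then $v=xG$ is a nonzero element of $C\cap C^{\perp_{W-\delta-\mu}}$, so $C$ is not LCD.

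The only delicate step is handling the twist by $\delta$. One must note that the kernel vector of $M$ corresponds to $\Delta(x)$ rather than $x$, and that $\Delta$ is a bijection of $F_q^k$ that preserves being nonzero. A second point is the orientation of the product: orthogonality is tested with the codeword $v$ in the second slot, so the relevant matrix is $M$ acting on a column vector, not on a row vector. Apart from these points, the argument is a routine adaptation of Massey's proof.
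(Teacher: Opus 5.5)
Your proof is correct and takes essentially the same route as the paper. Both arguments show that $xG\in C\cap C^{\perp_{W-\delta-\mu}}$ exactly when the $\Delta$-twisted coefficient vector is killed by $M$, equivalently by $T(G)G^T=M^T$. The paper phrases this as $T(aG)G^T=0$, while you read it off the Gram matrix as $M\Delta(x)^T=0$.

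Your twist $\Delta(x)$ is the correct one, since $T(aG)=\Delta(a)T(G)$ by $\delta$-semilinearity. The paper writes $T(aG)=\Delta^{-1}(a)T(G)$, which is a slip but does not affect its conclusion because $\Delta$ is a bijection.
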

	\begin{proof} Let $C$ be not $W-\delta-\mu$ LCD. Then by Definition \ref{LCD}, there exists $\; x\neq 0$ in $C$ such that $x\in C\cap C^{\perp_{W-\delta-\mu}}.$ Following the proof of Lemma \ref{4.3}, this implies $<y,T(x)>_{Euclidean}=0$ for all $y\in C$. Thus, $T(x)(G)^T=0.$ Further, $x=aG$ for some $a\neq 0$ as $G$ is a generator matrix for $C$. Hence, $T(aG)G^T=0$. It is easy to verify that $T(aG)=\Delta^{-1}(a)T(G)$. Thus, $\Delta^{-1}(a)T(G)G^T=0$, with $\Delta^{-1}(a)\neq0$, as $a\neq 0$ and $\Delta$ is bijective. Hence, $T(G)G^T$ is singular. Therefore, $M=G(T(G))^T$ is singular. 
		
		Conversely, suppose that $M$ is singular then there exists $a\neq 0$ in $F_q^k$ such that  $M(\Delta^{-1}(a))^T=0.$ This implies $G(T(G))^T(\Delta^{-1}(a))^T=0$. Taking transpose, $(\Delta^{-1}(a))T(G)G^T=0.$ Thus, $T(aG)G^T=0$. Since $G$ is a generator matrix which is always injective, therefore, $aG=x$ is a non zero codeword of $C$. Hence, $T(x)G^T=0$, where $x\neq0$ in $C$. This implies, $<y,T(x)>_{Euclidean}=0$ for each row $y$ of $G$. Hence, by linearity, $<y,T(x)>_{Euclidean}=0$ for each $y\in C$ as $G$ generates $C$. Following the proof of Lemma \ref{4.3}, $x\in C^{\perp_{W-\delta-\mu}}$. Hence, $x\in C\cap C^{\perp_{W-\delta-\mu}}$. Therefore, $C$ is not $W-\delta-\mu$ LCD.
	\end{proof}
	
	We now give a sufficient condition for the existence of a set $W$ such that a given linear code $C$ is $W-\delta-\mu$ LCD for any choice of $\delta$ and $\mu$. First, we state $Schwartz\; Zippel\; Lemma$ in a special form.
	
	\begin{lem}\cite[Schwartz Zippel Lemma]{Atserias}\label{5.13}
		Let $P(x_1,x_2,...,x_n)$ be a non zero multivariate polynomial in $n$ variables over an integral domain $R$, of degree $d$. Then $|Z(P)\cap S^n|\leq d|S|^{n-1}$, where $Z(P)$ is the zero set of $P$ and $S$ is any non empty subset of $R$.
	\end{lem}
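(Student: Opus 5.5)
The plan is to prove the bound by induction on the number of variables $n$, always taking $d$ to be the total degree of $P$. The only algebraic input is the standard fact that a nonzero univariate polynomial of degree $d$ over an integral domain $R$ has at most $d$ roots in $R$. To justify it, embed $R$ in its field of fractions and use that each root $a$ gives a factor $x-a$. Because $R$ has no zero divisors, degrees add under multiplication, so the number of roots is bounded by the degree.

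\emph{Base case.} For $n=1$ the claim $|Z(P)\cap S|\leq d=d|S|^{0}$ is exactly this root bound. If $d=0$, then $P$ is a nonzero constant and has no zeros at all, so the inequality also holds trivially in every number of variables.

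\emph{Inductive step.} Assume the statement for polynomials in $n-1$ variables. I will expand $P$ in powers of $x_1$:
\[
P(x_1,\ldots,x_n)=\sum_{i=0}^{k}x_1^{i}P_i(x_2,\ldots,x_n),
\]
where $k$ is the largest index with $P_k\neq 0$. Since $P$ has total degree $d$, we have $\deg P_k\leq d-k$. I will split the points $(a_1,\ldots,a_n)\in Z(P)\cap S^n$ into two classes.
\begin{enumerate}[(i)]
\item Points with $P_k(a_2,\ldots,a_n)=0$. By the induction hypothesis there are at most $(d-k)|S|^{n-2}$ such tuples $(a_2,\ldots,a_n)\in S^{n-1}$. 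Allowing all $|S|$ choices of $a_1$, this class has at most $(d-k)|S|^{n-1}$ points.
\item Points with $P_k(a_2,\ldots,a_n)\neq 0$. Here the univariate polynomial $P(x_1,a_2,\ldots,a_n)\in R[x_1]$ is nonzero of degree exactly $k$. By the base case it has at most $k$ roots in $S$. With at most $|S|^{n-1}$ choices of $(a_2,\ldots,a_n)$, this class has at most $k|S|^{n-1}$ points.
\end{enumerate}
Adding the two bounds gives $|Z(P)\cap S^n|\leq (d-k)|S|^{n-1}+k|S|^{n-1}=d|S|^{n-1}$, which completes the induction.

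The main point needing care is the degree bookkeeping in the inductive step. One must choose $k$ as the top power of $x_1$ with a nonzero coefficient polynomial, so that $\deg P_k\leq d-k$, and must use that specialising $x_2,\ldots,x_n$ at a non-root of $P_k$ leaves a genuinely nonzero polynomial of degree $k$ in $x_1$. The edge cases $k=0$ and $d-k=0$ are covered by the trivial constant case. No result from the paper is needed beyond these elementary facts.
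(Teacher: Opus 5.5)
Your proof is correct and complete. It is the classical induction on the number of variables, and you handle each delicate point properly:
\begin{enumerate}[(i)]
\item choosing $k$ as the top power of $x_1$ with $P_k\neq 0$, which gives $\deg P_k\le d-k$;
\item noting that $P_k(a_2,\ldots,a_n)\neq 0$ is the leading coefficient of the specialised polynomial, so it has exact degree $k$;
\item the constant cases $k=0$ and $k=d$.
\end{enumerate}

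The paper gives no proof of its own to compare with; it only quotes the lemma from the cited reference. Your argument therefore supplies a self-contained justification in the stated generality (an arbitrary integral domain $R$ and nonempty $S$). Since the bound is monotone in $d$, it also covers the ``degree at most $k$'' form in which the paper uses it.

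One caution about that application: your proof uses the hypothesis $P\neq 0$ essentially, and the proof of Theorem~\ref{5.12} never checks that $\det(G(T(G))^T)$ is a nonzero polynomial in the $w_i$. It can in fact vanish identically. Take $C$ generated by $g=(1,0)$ and $\mu=(1\,2)$. Then $T(g)=(w_1\delta(0),w_2\delta(1))=(0,w_2)$, so $G(T(G))^T=0$ for every $W$. Hence $C$ is never $W$-$\delta$-$\mu$ LCD, even though $q>k+1$ holds for all $q\ge 3$.
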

	
	\begin{thm}\label{5.12} Let $C$ be a linear code of length $n$ over $F_q$ of dimension $k$. If $q>k+1$, then there exists $\; W=\{w_1,w_2,...,w_n\}\subseteq F_q^*$ such that $C$ is $W-\delta-\mu$ LCD for any choice of $\delta$ and $\mu$.
	\end{thm}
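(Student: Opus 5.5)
The plan is to reduce $W-\delta-\mu$ LCD-ness to the non-vanishing of a polynomial in $w_1,\dots,w_n$, and then apply the Schwartz--Zippel lemma (Lemma \ref{5.13}) with $S=F_q^*$. I read the statement as: for each fixed $\delta$ and $\mu$ there is a $W$ (which may depend on them). Before any Schwartz--Zippel argument can work, the polynomial has to be shown nonzero, and below I show this needs an extra hypothesis.

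\emph{Step 1: reduce to a determinant.} Fix $\delta\in Aut(F_q)$ and $\mu\in S_n$, and let $G$ be a $k\times n$ generator matrix of $C$. By Proposition \ref{LCD matrix}, $C$ is $W-\delta-\mu$ LCD if and only if $M=G(T(G))^T$ is nonsingular. Define $H$ by $H_{s,i}=\delta(g^{(s)}_{\mu(i)})$, i.e.\ $H$ is $\Delta(G)$ with its columns permuted by $\mu$. Then $M=G\,D_W\,H^T$, and every entry of $M$ is linear in the $w_i$. So $P(w_1,\dots,w_n)=\det M$ is a polynomial of degree at most $k$.

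\emph{Step 2: count zeros.} Suppose $P\neq 0$. Lemma \ref{5.13} with $S=F_q^*$ gives at most $k(q-1)^{n-1}$ zeros of $P$ in $(F_q^*)^n$. Since $q>k+1$, we have $k(q-1)^{n-1}<(q-1)^n$, so some $W\in(F_q^*)^n$ satisfies $P(W)\neq 0$, and $C$ is then $W-\delta-\mu$ LCD.

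\emph{Step 3: when is $P$ nonzero?} This is the main obstacle. Cauchy--Binet gives
\[\det M=\sum_{|J|=k}\det G_J\,\det H_J\prod_{j\in J}w_j ,\]
where the sum runs over $k$-subsets $J\subseteq\{1,\dots,n\}$. Different $J$ give different monomials, and $\det H_J=\pm\,\delta(\det G_{\mu(J)})$. Hence $P\neq0$ if and only if there is a $k$-subset $J$ for which both $G_J$ and $G_{\mu(J)}$ are nonsingular. When $\mu=I_d$ this holds automatically, because $G$ has rank $k$.

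For general $\mu$ this condition can fail, and the theorem as stated then fails too. Take $q>2$, let $C$ be generated by $(1,0)$, and let $\mu=(1\,2)$. For $x=(a,0)$ we get $\langle x,x\rangle_{W-\delta-\mu}=w_1a\,\delta(0)+w_2\cdot 0\cdot\delta(a)=0$. So $C\subseteq C^{\perp_{W-\delta-\mu}}$ for every $W$, and $C$ is never $W-\delta-\mu$ LCD.

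I would therefore prove the theorem under the added hypothesis that some information set $J$ of $C$ has $\mu(J)$ also an information set. This always holds when $\mu=I_d$, which covers the Euclidean, Hermitian, $\delta$ and weighted products. Under this hypothesis, Steps 1 and 2 finish the proof.

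If the statement is meant to give a single $W$ that works simultaneously for all admissible pairs $(\delta,\mu)$, I would apply Schwartz--Zippel to the product of the corresponding determinants. That polynomial has degree at most $k\,|Aut(F_q)|\,n!$, so this uniform version needs a correspondingly larger $q$.
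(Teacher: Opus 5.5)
Your Steps 1 and 2 are the paper's proof. The paper also expresses the entries of $G(T(G))^T$ as linear forms in the $w_i$, sets $P=\det G(T(G))^T$, and applies Lemma \ref{5.13} with $S=F_q^*$ to get $k(q-1)^{n-1}<(q-1)^n$.

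Your Step 3 is what the paper lacks, and it is essential. The paper never checks that $P$ is a nonzero polynomial, which Lemma \ref{5.13} requires. Your example shows this is a genuine gap, not a routine omission. With $G=[1\;0]$ and $\mu=(1\,2)$ you get $T(G)=(w_1\delta(0),w_2\delta(1))=(0,w_2)$, so $G(T(G))^T=0$ identically. Hence $C\subseteq C^{\perp_{W-\delta-\mu}}$ for every $W$ and every $\delta$. Since $q\geq 3>2=k+1$, the theorem as stated fails under either reading of the quantifiers.

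Your Cauchy--Binet computation is correct: distinct $J$ give distinct squarefree monomials, and $\det H_J=\pm\delta(\det G_{\mu(J)})$. It actually yields a sharp statement. If no $k$-subset $J$ has both $J$ and $\mu(J)$ information sets, then $P\equiv 0$ and $C$ is $W-\delta-\mu$ LCD for no $W$ at all. So for $q>k+1$ your added hypothesis is necessary and sufficient for a suitable $W$ to exist, and it does not depend on $\delta$.

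Also note that the paper's argument, like yours, produces a $W$ depending on $(\delta,\mu)$, so its proof only addresses the per-pair reading. Your product-of-determinants remark is the right route to a uniform $W$, provided your condition holds for every $\mu$ considered, at the cost of the larger $q$ you indicate.
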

	\begin{proof} Let $G=\begin{bmatrix}
			g_1^1 & g_2^1 & \cdots & g_n^1\\
			g_1^2 & g_2^2 & \cdots & g_n^2\\
			\vdots & \vdots\ & \ddots & \vdots\\
			g_1^k & g_2^k & \cdots & g_n^k
		\end{bmatrix}$ be a generator matrix of $C$. Consider an arbitrary set $W=\{w_1,w_2,...,w_n\}\subseteq F_q^*$. Then for any $\delta$, $\mu$, with $T(y)=w_i\delta(y_{\mu(i)})$, we have, $(T(G))^T=\begin{bmatrix}
			w_1\delta(g^1_{\mu(1)}) & w_1\delta(g^2_{\mu(1)}) & \cdots & w_1\delta(g^k_{\mu(1)})\\
			w_2\delta(g^1_{\mu(2)}) & w_2\delta(g^2_{\mu(2)}) & \cdots & w_2\delta(g^k_{\mu(2)})\\
			\vdots & \vdots & \ddots & \vdots\\
			w_n\delta(g^1_{\mu(n)}) & w_n\delta(g^2_{\mu(n)}) & \cdots & w_n\delta(g^k_{\mu(n)})\\
		\end{bmatrix} $. Thus, $(G(T(G))^T)_{ij}= \displaystyle\sum_{k=1}^{n} w_kg_k^{(i)}\delta (g_{\mu(k)}^{(j)})$ which is a linear expression in  $w_1,w_2,...,w_n$ over $F_q$. Thus, determinant of $G(T(G))^T$ is a polynomial in the variables $w_i$, say, $P(w_1,w_2,...,w_n)$ of degree at most $k$, the size of $G(T(G))^T$. Using Proposition \ref{LCD matrix}, $C$ is $W-\delta-\mu$ LCD if and only if $G(T(G))^T$ is non singular if and only if $Det(G(T(G))^T)\neq 0$. Equivalently, $P(w_1,w_2,...,w_n)\neq 0$. Since, $P(w_1,w_2,...,w_n)$ is a multivariate polynomial in $n$ variables over $F_q$ of degree $k$, using Lemma \ref{5.13} and $S=F_q^*$ with $|S|=q-1$, we get that $P$ has at most $k(q-1)^{n-1}$ zeros in $F_q^*$. Since, $q>k+1$, that is, $k<q-1$, therefore, $k(q-1)^{n-1}<(q-1)^n$. Thus, there exists a set $\;\{w_1,w_2,...,w_n\}\subseteq F_q^*$ such that $P(w_1,w_2,...,w_n)\neq 0$. Thus, $C$ is $W-\delta-\mu$ LCD.
	\end{proof}
	Note that the above result shows that for a large enough field, having size more than the dimension of a linear code $C$, we can always have a $W$ which makes $C$ a $W-\delta-\mu$ LCD code, for arbitrary $\delta$ and $\mu$.
		\section{$W$-$\delta$-$\mu$ dual of some classes of codes} 	Now, we explore the duals of some particular classes of codes with respect to the $W-\delta-\mu$ product. The symbols used in this section have the same meaning as in Proposition \ref{SO and SD}.
	\begin{rmk}
		\begin{enumerate}
			
			\item {\textbf{Repetition codes}}: Let $C$ be a repetition code of length $n$ over $F_q$, that is, $C=\{(x,x,\ldots,x)\::\:x\in\:F_q\}$. Then
			\[
			\begin{aligned}
				C^{\perp_{W-\delta-\mu}}&=\{y\in\:F_q^n\::\:<c,y>_{W-\delta-\mu}=0\;\forall\;c\in\:C\}\\
				&=\{(y_1,y_2,\ldots,y_n)\in\:F_q^n\;:\;\displaystyle\sum_{i=1}^{n}w_ix\delta(y_{\mu(i)})=0\;\forall\;x\in F_q\}\\
				&=\{(y_1,y_2,...,y_n)\in\:F_q^n\;:\;\displaystyle\sum_{i=1}^{n}w_i\delta(y_{\mu(i)})=0\}\\
				&=\{(y_1,y_2,...,y_n)\;:\; \displaystyle\sum_{i=1}^{n}\delta^{-1}(w_{\mu^{-1}(i)})y_i=0\}\\
				&=\{y\in\:F_q^n\;:\;<y,\Delta^{-1}(\mu^{-1}(W))>_{Euclidean}=0\}\\
				&=(\Delta^{-1}(\mu^{-1}(W)))^{\perp_{Euclidean}
				}
			\end{aligned}
			\]
			
			Thus, we see that the $W-\delta-\mu$ dual of a repetition code is simply the Euclidean orthogonal space of the successively taken inverse images under $\mu$ and $\Delta$ of the vector $(w_1,w_2,...,w_n)\subseteq (F_q^*)^n$
			
			\item{\textbf{Binary codes}}: If the code alphabet is the binary field $F_2$ then the only possibilities for $W$ and $\delta$ are $\{1,1\}\subseteq F_2^*$ and $I_d$ on $F_q$.Thus, the $W-\delta-\mu$ product differs from the Euclidean product only by a permutation on the second coordinate.
		\end{enumerate}
	\end{rmk}
	Following \cite[Proposition 2.3]{Dinh}, the Euclidean dual of a $\lambda$-constacylic code is a $\lambda^{-1}-$constacyclic code. Further, in \cite[Proposition 3.2]{Dinh}, authors show that the $\delta$-dual code of a $\lambda-$constacyclic code of length $n$ over a finite field $F$ is a $\delta^{-1}(\lambda^{-1})$- constacyclic code of length $n$ over $F$. In case of $W-\delta-\mu$ product the $W-\delta-\mu$ dual of a constacyclic code need not be constacyclic as illustrated in Example \ref{consta example}. However, in the following proposition we show that if $\mu$ is the identity permutation then $W-\delta-\mu$ dual of a $\lambda$- constacyclic code is invariant under a map which involves scaling along with constacylic shift.
			\begin{propo} Let $C$ be a $\lambda$ constacyclic code over $F_q$ for a given $\lambda\in\:F_q^*$ and $\mu$ be the identity permutation on $n$ symbols. Then $C^{\perp_{W-\delta-\mu}}$ is invariant under the operator $\zeta:F_q^n\longrightarrow F_q^n$ given by $\zeta(y)=\tau_{\delta^{-1}(\lambda^{-1})}(W'y)$, where $W'=\Delta^{-1}(w_1w_2^{-1},...,w_nw_1^{-1})$.
			\end{propo}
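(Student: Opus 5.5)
The plan is to reduce the statement to the known fact (\cite[Proposition 2.3]{Dinh}) that the Euclidean dual of a $\lambda$-constacyclic code is $\lambda^{-1}$-constacyclic. The bridge is the $\delta$-semilinear bijection $T$ of Lemma \ref{4.3}. Since $\mu=I_d$, it acts coordinatewise as $(T(y))_i=w_i\delta(y_i)$, and Lemma \ref{4.3} gives $C^{\perp_{W-\delta-\mu}}=T^{-1}(C^{\perp_{Euclidean}})$. I read $\tau_a$ as the $a$-constacyclic shift $T_a$ of Section 2, i.e. $\tau_a(c_1,\ldots,c_n)=(ac_n,c_1,\ldots,c_{n-1})$. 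I read $W'y$ as the componentwise product, so that $(W'y)_j=\delta^{-1}(w_jw_{j+1}^{-1})\,y_j$ with indices taken mod $n$.

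The key steps are as follows.
\begin{enumerate}[(1)]
\item Fix $y\in C^{\perp_{W-\delta-\mu}}$ and put $z=T(y)$, so $z_i=w_i\delta(y_i)$. By Lemma \ref{4.3}, $z\in C^{\perp_{Euclidean}}$.
\item Since $C^{\perp_{Euclidean}}$ is $\lambda^{-1}$-constacyclic, it contains $z'=T_{\lambda^{-1}}(z)=(\lambda^{-1}z_n,z_1,\ldots,z_{n-1})$.
\item Because $T$ is a bijection, $y':=T^{-1}(z')$ lies in $T^{-1}(C^{\perp_{Euclidean}})=C^{\perp_{W-\delta-\mu}}$.
\item Compute $y'$ coordinatewise and show it equals $\zeta(y)$.
\end{enumerate}

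For step (4), solve $w_i\delta(y'_i)=z'_i$. For $i\ge 2$ this reads $w_i\delta(y'_i)=w_{i-1}\delta(y_{i-1})$. Applying $\delta^{-1}$, which is multiplicative, gives $y'_i=\delta^{-1}(w_{i-1}w_i^{-1})\,y_{i-1}$. For $i=1$ it reads $w_1\delta(y'_1)=\lambda^{-1}w_n\delta(y_n)$, which gives $y'_1=\delta^{-1}(\lambda^{-1})\,\delta^{-1}(w_nw_1^{-1})\,y_n$.

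Now compare with $\zeta$. The vector $W'y$ has $j$-th entry $\delta^{-1}(w_jw_{j+1}^{-1})y_j$. Applying $\tau_{\delta^{-1}(\lambda^{-1})}$ moves entry $j$ to position $j+1$ for $j<n$, and sends the last entry to position $1$ multiplied by $\delta^{-1}(\lambda^{-1})$. These are exactly the formulas just derived, so $y'=\zeta(y)$ and hence $\zeta(C^{\perp_{W-\delta-\mu}})\subseteq C^{\perp_{W-\delta-\mu}}$.

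Finally, $\zeta$ is injective, being a composition of a diagonal scaling by nonzero entries with a constacyclic shift by a nonzero constant. Since the space is finite, the inclusion is then an equality, so the dual is invariant under $\zeta$ in either sense.

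The main obstacle is step (4): keeping the indices and the order of scaling versus shifting consistent, so that the $i$-th weight ratio $w_iw_{i+1}^{-1}$ in $W'$ lines up with the shifted coordinate. The wrap-around coordinate, which absorbs both $w_nw_1^{-1}$ and $\lambda^{-1}$ under $\delta^{-1}$, needs separate care. I would check this with $n=2$ before writing it in general. I would also state explicitly the convention for $\tau$, because the formula in the statement only matches the computation under the scale-then-shift reading.
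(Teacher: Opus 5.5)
Your argument is correct, but it is organised differently from the paper's.

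The paper never leaves the $W$-$\delta$-$\mu$ product. It uses $\tau_\lambda^{n-1}(x)=(\lambda x_2,\ldots,\lambda x_n,x_1)\in C$ to rewrite the condition $y\in C^{\perp_{W-\delta-\mu}}$ as $\langle\tau_\lambda^{n-1}(x),y\rangle_{W-\delta-\mu}=0$ for all $x\in C$. It then expands, divides by $\lambda$, and rewrites each term $w_{i-1}x_i\delta(y_{i-1})$ as $w_ix_i\,\delta\bigl(\delta^{-1}(w_{i-1}w_i^{-1})y_{i-1}\bigr)$, until it recognises $\langle x,\zeta(y)\rangle_{W-\delta-\mu}$. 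That single chain gives $y\in C^{\perp_{W-\delta-\mu}}\iff\zeta(y)\in C^{\perp_{W-\delta-\mu}}$ with no outside input; in effect it reproves the Euclidean constacyclic-dual fact inline in weighted form. The cost is that the shape of $\zeta$ only emerges after some index juggling.

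You instead transport everything through the bijection $T$ of Lemma~\ref{4.3} and import the Euclidean result, which amounts to proving $\zeta=T^{-1}\circ\tau_{\lambda^{-1}}\circ T$. This buys three things:
\begin{enumerate}[(1)]
\item It explains where $\zeta$ comes from: it is the Euclidean dual's symmetry conjugated by $T$.
\item It reduces the bookkeeping to solving $w_i\delta(y'_i)=z'_i$ coordinatewise.
\item It generalises verbatim to arbitrary $\mu$. Then $C^{\perp_{W-\delta-\mu}}$ is invariant under $T^{-1}\circ\tau_{\lambda^{-1}}\circ T$, a weighted shift along the cycle $(\mu(1)\,\mu(2)\,\cdots\,\mu(n))$. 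This is exactly the symmetry that survives in Example~\ref{consta example}.
\end{enumerate}

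Your closing injectivity step is fine. Alternatively, $\tau_{\lambda^{-1}}(C^{\perp_{Euclidean}})=C^{\perp_{Euclidean}}$ conjugates directly to $\zeta(C^{\perp_{W-\delta-\mu}})=C^{\perp_{W-\delta-\mu}}$.
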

			\begin{proof}
	    	 Since $C$ is a $\lambda-$constacylic code, therefore for all $x=(x_1,x_2,...,x_n)\in\:C\:, \tau_\lambda(x)=(\lambda x_n,x_1,...,x_{n-1})\in\:C$ which implies $\tau_\lambda^{n-1}(x)=(\lambda x_2,\lambda x_3,...,\lambda x_n, x_1)\in\:C$. Thus, $y\in C^{\perp_{W-\delta-\mu}}$ if and only if $y\in (\tau_\lambda^{n-1}(C))^{\perp_{W-\delta-\mu}}.$ We have,
			
			$$y\in\:C^{\perp_{W-\delta-\mu}} \iff y\in (\tau_\lambda^{n-1}(C))^{\perp_{W-\delta-\mu}}\\$$	$$\iff<\tau_\lambda^{n-1}(x),y>_{W-\delta-\mu}=0 \:\forall\: x\in\:C\\$$
			$$\iff\:w_nx_1\delta(y_{\mu(n)})+\displaystyle\sum_{i=2}^{n}w_{i-1}\lambda x_i\delta(y_{\mu(i-1)})=0\\$$
			$$\iff\lambda^{-1}w_nx_1\delta(y_{n})+\displaystyle\sum_{i=2}^{n}w_{i-1}x_i\delta(y_{i-1})=0\\$$
			$$\iff x_1\delta(\delta^{-1}(\lambda^{-1}w_n)y_{n})+x_2\delta(\delta^{-1}(w_1)y_{1})+...\\$$
			$$\;\;\;\;\;\;\;\;\;\;\;\;+x_n\delta(\delta^{-1}(w_{n-1})y_{n-1})=0\\$$
			$$\iff w_1x_1\delta(\delta^{-1}(\lambda^{-1})\delta^{-1}(w_nw_1^{-1}))y_{n})+w_2x_2\delta(\delta^{-1}(w_1w_2^{-1})y_{1})+...\\$$
			$$\;\;\;\;\;\;\;\;\;\;\;\;+w_nx_n\delta(\delta^{-1}(w_{n-1}w_n^{-1})y_{n-1})=0\\$$
			$$\iff<x,\tau_{\delta^{-1}(\lambda^{-1})}(W'y)>_{W-\delta-\mu}=0\\$$
			$$\iff <x,\zeta(y)>_{W-\delta-\mu}=0,$$ where $W'=\Delta^{-1}(w_1w_2^{-1},...,w_nw_1^{-1})$. Therefore, $C^{\perp_{W-\delta-\mu}}=\zeta(C^{\perp_{W-\delta-\mu}}).$	
		\end{proof}
	We observe here that the $W-\delta-\mu$ dual of a $\lambda$ constacyclic code need not be a constacyclic code. Below, we give an example to illustrate this point.
	
	\begin{exl}\label{consta example} Consider $F_5=\{0,1,2,3,4\}$ with mod 5 arithmetic, $W=\{1,2,1,1\}\subseteq F_5^*, \mu=(1 \;2\;3\;4)\in S_4$ and $\delta=I_d \in Aut (F_5)$. For the linear repetition code (which is also cyclic and hence constacyclic) $C=\{(a,a,a,a)\;:\;a\in F_5\}$, $y\in F_5^4$ is in $C^{\perp_{W-\delta-\mu}}$ if and only if $\forall\;x=(a,a,a,a)\in C\;
		<x,y>_{W-\delta-\mu}=0$ which is if and only if $$y_1+y_2+2y_3+y_4 = 0\;(mod\;5).$$ Clearly, $y=(4,1,0,0)$ is one such solution and hence a member of the $W-\delta-\mu$ dual, but any $\lambda$ constacyclic shift of $y$, i.e., $\tau_\lambda(y)$ is equal to $(0,4,1,0), \forall \;\lambda\in F_5$ which is not in the  dual since $(0,4,1,0)$ does not satisfy the required condition. Therefore, $C^{\perp_{W-\delta-\mu}}$ is not $\lambda$- constacyclic.
	\end{exl}
	
	\section{$W-\delta-\mu$ Dual of codes over semi-simple rings}
	We now generalise in a natural way the definition of $W-\delta-\mu$ product to the case when the alphabet is a semisimple ring and the ambient space is $R^n$ as $R$ module.  
	\begin{rmk}\label{decomposition}
		Let $R=\displaystyle\bigoplus_{j=1}^k F_{q_j}$ be a finite commutative semisimple ring. Then this is a standard result \cite[Proposition 3.1]{Dinh2} that if $C$ is a linear code over $R$ of length $n$, then $C=\displaystyle\bigoplus_{j=1}^{k}C_j$, where $C_j$ is a linear code of length $n$ over $F_{q_j}$, for $j=1,2,...,k$.
	\end{rmk}
	
	\begin{dfn}
	 Let $W=\{W^{(1)},W^{(2)},...,W^{(k)}\},\; \mu=\{\mu^{(1)},\mu^{(2)},...,\mu^{(k)}\},\; \delta=(\delta^{(1)},\delta^{(2)},...,\delta^{(k)}),$ where $W^{(i)}=\{w^{(i)}_1,w^{(i)}_2,...,w^{(i)}_n\}\subseteq F_{q_i}^*,\;\mu^{(i)}\in S_n,\delta^{(i)}\in\:Aut(F_{q_i}),\: x=(x^{(1)},x^{(2)},...,x^{(k)}), y=(y^{(1)},y^{(2)},...,y^{(k)})\in R^n$, where $x^{(i)},y^{(i)}\in F_{q_i}^n$. We define, $<x,y>_{W-\delta-\mu}:R^n\times R^n\longrightarrow R$ by $$<x,y>_{W-\delta-\mu}=(<x^{(1)},y^{(1)}>_{W^{(1)}-\delta^{(1)}-\mu^{(1)}},.\:.\:.\:,<x^{(k)},y^{(k)}>_{W^{(k)}-\delta^{(k)}-\mu^{(k)}}).$$
	\end{dfn}
	The following results are valid for $W-\delta-\mu$ product anlogously as they are valid for the $\delta$-product in \cite{Dinh}. So, their proofs are also analogous to that.
	In the following result, we generalize \cite[Proposition 3.4]{Dinh} and its proof is analogous.
	
	\begin{propo}\label{8.2} Let $C=\displaystyle\bigoplus_{j=1}^{k}C_j$ be a code of length $n$ over $R$. Then $C^{\perp_{W-\delta-\mu}}=\displaystyle\bigoplus_{j=1}^{k}C_j^{\perp_{W^{(j)}-\delta^{(j)}-\mu^{(j)}}}$.\end{propo}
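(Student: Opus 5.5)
The plan is to prove the equality by double inclusion. The whole argument rests on two facts: the product on $R^n$ is defined componentwise, and the decomposition of Remark~\ref{decomposition} identifies $R^n$ with $\bigoplus_{j=1}^k F_{q_j}^n$. Throughout, I write every $x\in R^n$ as $x=(x^{(1)},\dots,x^{(k)})$ with $x^{(j)}\in F_{q_j}^n$. Similarly, I write $C=\bigoplus_j C_j$, so that $x\in C$ if and only if $x^{(j)}\in C_j$ for every $j$. Concretely, using the primitive idempotents $e_j$ of $R$, we have $C_j=e_jC$, and $C$ is closed under multiplication by each $e_j$.

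\emph{Inclusion $\supseteq$.} Take $y$ with $y^{(j)}\in C_j^{\perp_{W^{(j)}-\delta^{(j)}-\mu^{(j)}}}$ for all $j$. For any $x\in C$, each component of the definition gives
\[
\langle x,y\rangle_{W-\delta-\mu}=\bigl(\langle x^{(1)},y^{(1)}\rangle_{W^{(1)}-\delta^{(1)}-\mu^{(1)}},\dots,\langle x^{(k)},y^{(k)}\rangle_{W^{(k)}-\delta^{(k)}-\mu^{(k)}}\bigr)=0,
\]
since each $x^{(j)}\in C_j$. Hence $y\in C^{\perp_{W-\delta-\mu}}$.

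\emph{Inclusion $\subseteq$.} Let $y\in C^{\perp_{W-\delta-\mu}}$ and fix $j$. For an arbitrary $c\in C_j$, I use the vector $x\in R^n$ whose $j$-th component is $c$ and whose other components are $0$. This $x$ lies in $C$ because $C$ is the direct sum of the $C_j$; equivalently, $x=e_jx'$ for some $x'\in C$. The $j$-th coordinate of $\langle x,y\rangle_{W-\delta-\mu}=0$ is exactly $\langle c,y^{(j)}\rangle_{W^{(j)}-\delta^{(j)}-\mu^{(j)}}$, so this value is $0$. Since $c\in C_j$ was arbitrary, $y^{(j)}\in C_j^{\perp_{W^{(j)}-\delta^{(j)}-\mu^{(j)}}}$ for every $j$, and therefore $y$ lies in the direct sum.

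The only point needing care is this last step, namely that embedding a single component of a codeword into $R^n$ with zeros elsewhere stays inside $C$. This follows from $C$ being an $R$-submodule, closed under multiplication by the idempotents $e_j$, which is precisely the content of Remark~\ref{decomposition}. No properties of the $W$--$\delta$--$\mu$ product beyond its componentwise definition are needed. As a consistency check, one may also note that the right-hand side is an $R$-submodule whose size is $\prod_j q_j^{\,n-\dim C_j}$ by Proposition~\ref{4.4}.
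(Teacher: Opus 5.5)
Your argument is correct. The paper gives no proof of its own, only saying it is analogous to Proposition 3.4 of Dinh et al., and your componentwise double inclusion, which uses only the coordinatewise definition of the product on $R^n$, is essentially that intended proof. The embedding-with-zeros step is harmless but not needed: for every $x\in C$, the $j$-th coordinate of $\langle x,y\rangle_{W-\delta-\mu}=0$ already gives $\langle x^{(j)},y^{(j)}\rangle_{W^{(j)}-\delta^{(j)}-\mu^{(j)}}=0$, and $x^{(j)}$ runs over all of $C_j$ as $x$ runs over $C$.
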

	A result \cite[Theorem 3.5]{Dinh} is a consequence of the following result and its proof is on the same line to that.
	
	\begin{thm}\label{followings} Let $C$ be a linear code over $R$ of length $n$. Then followings hold true:
		\begin{enumerate}[(1)]
			\item $C$ is $W-\delta-\mu$ self orthogonal (self dual) if and only if each constituent code $C_j,\;j=1,2..,k$ is $W^{(j)}-\delta^{(j)}-\mu^{(j)}$ is self orthogonal (self dual).
			\item $C$ is $W-\delta-\mu$ dual containing if and only if each constituent code $C_j$ is $W^{(j)}-\delta^{(j)}-\mu^{(j)}$ dual containing.
			\item $C$ is a $W-\delta-\mu$ LCD code if and only if each constituent code $C_j$ is a $W^{(j)}-\delta^{(j)}-\mu^{(j)}$ LCD code.
		\end{enumerate}
			\end{thm}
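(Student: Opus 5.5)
The whole argument rests on two facts: the componentwise decomposition of Remark \ref{decomposition}, $C=\bigoplus_{j=1}^{k}C_j$, and the description of the dual in Proposition \ref{8.2}, $C^{\perp_{W-\delta-\mu}}=\bigoplus_{j=1}^{k}C_j^{\perp_{W^{(j)}-\delta^{(j)}-\mu^{(j)}}}$. Both $C$ and its dual are therefore internal direct sums indexed by the same idempotent decomposition $R=\bigoplus_j F_{q_j}$. Concretely, $R^n=\bigoplus_j F_{q_j}^n$, and a vector $x$ lies in $\bigoplus_j D_j$ precisely when each component $x^{(j)}$ lies in $D_j$.

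The key step is a small lattice lemma. Let $A=\bigoplus_j A_j$ and $B=\bigoplus_j B_j$ be subsets of $R^n$ with $A_j,B_j\subseteq F_{q_j}^n$. Then:
\begin{enumerate}[(a)]
\item $A\subseteq B$ if and only if $A_j\subseteq B_j$ for all $j$;
\item $A\cap B=\bigoplus_j (A_j\cap B_j)$.
\end{enumerate}
For the forward direction of (a), take $a\in A_j$. The vector whose $j$-th component is $a$ and whose other components are $0$ lies in $A$, using $0\in A_i$ for every $i$ because each $A_i$ is a linear code. Hence it lies in $B$, so $a\in B_j$. The reverse direction is immediate. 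Part (b) follows by comparing components. Equality of direct sums is equivalent to componentwise equality by applying (a) twice.

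With this lemma the three items of Theorem \ref{followings} follow directly.
\begin{enumerate}[(1)]
\item $C\subseteq C^{\perp_{W-\delta-\mu}}$ if and only if $C_j\subseteq C_j^{\perp_{W^{(j)}-\delta^{(j)}-\mu^{(j)}}}$ for every $j$. For self duality, apply (a) in both directions to get equality.
\item Dual containment is the reverse inclusion, so it is again handled by (a).
\item By (b), $C\cap C^{\perp_{W-\delta-\mu}}=\bigoplus_j \bigl(C_j\cap C_j^{\perp_{W^{(j)}-\delta^{(j)}-\mu^{(j)}}}\bigr)$. A direct sum is $\{0\}$ exactly when every summand is $\{0\}$, which is the componentwise LCD condition.
\end{enumerate}

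The only real obstacle is making sure Proposition \ref{8.2} is actually established, since the paper only asserts it by analogy with \cite{Dinh}. If needed, I would prove it directly. By definition, $\langle x,y\rangle_{W-\delta-\mu}=0$ in $R$ if and only if each component $\langle x^{(j)},y^{(j)}\rangle_{W^{(j)}-\delta^{(j)}-\mu^{(j)}}=0$. Take $y\in C^{\perp}$ and test it against $x$ supported in a single component $x^{(j)}\in C_j$. This gives $y^{(j)}\in C_j^{\perp}$ for each $j$. The converse inclusion is clear, since a $y$ whose components lie in the respective $C_j^{\perp}$ pairs to zero with every $x\in C$ componentwise. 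Everything else in the proof is bookkeeping.
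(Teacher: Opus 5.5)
Your proof is correct and is essentially the argument the paper intends: the paper writes out no proof and instead defers to the componentwise argument of \cite[Theorem 3.5]{Dinh}. That argument likewise combines $C=\bigoplus_{j=1}^{k}C_j$ with the dual decomposition of Proposition \ref{8.2} and compares the two sides component by component. Your direct verification of Proposition \ref{8.2}, which tests $y$ against codewords supported in a single component, is also correct; the paper only asserts that result by analogy, so your version makes the argument self-contained.
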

			\begin{coro}\cite[Theorem 3.5]{Dinh}
				Let $\lambda$ be a unit of $R$, and $C=\displaystyle\bigoplus_{i=0}^{k-1}C_i$
				a $\lambda-constacyclic$ code of length $n$ over $R$. The followings hold true:
				\begin{enumerate}[(1)]
					
					\item $C$ is $\delta_R$-dual-containing if and only if all of the codes $Ci$ are $\delta_i$
					-dual-containing for all $i=0,...,k-1.$
					\item  $C$ is $\delta_R$-self-orthogonal if and only if all of the codes $C_i$ are $\delta_i$-self-orthogonal for all $i=0, ...,k-1.$
					\item $C$ is $\delta_R$-self-dual over $R$ if and only if all the codes $C_i$ are $\delta_i$-self-dual over $F_i$	for all $i=0,...,k-1.$
					\item  $C$ is $\delta_R$-LCD over $R$ if and only if all the codes $C_i$	are $\delta_i$-LCD over $F_i$ for all $i=0, ...,k-1.$
				\end{enumerate}
			\end{coro}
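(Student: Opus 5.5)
The statement to prove is the Corollary, i.e. \cite[Theorem 3.5]{Dinh}. I would derive it from Theorem \ref{followings} by specializing the parameters. The $\delta_R$-product of \cite{Dinh} on $R^n$, with $R=\bigoplus_{i=0}^{k-1}F_i$ and $\delta_R=(\delta_0,\ldots,\delta_{k-1})$, acts componentwise. So it is exactly the $W-\delta-\mu$ product on $R^n$ in which, for every $i$, $W^{(i)}=\{1,1,\ldots,1\}\subseteq F_i^*$, $\mu^{(i)}=I_d\in S_n$ and $\delta^{(i)}=\delta_i$.

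The first step is to verify this identification carefully. Write $x=(x^{(0)},\ldots,x^{(k-1)})$ and $y=(y^{(0)},\ldots,y^{(k-1)})$ with $x^{(i)},y^{(i)}\in F_i^n$. Then the $i$-th component of $<x,y>_{W-\delta-\mu}$ is $\sum_{j=1}^n x^{(i)}_j\delta_i(y^{(i)}_j)$, which is the $\delta_i$-product on $F_i^n$. By the table of special cases, this is the $\delta$-product case on each constituent field. It follows that $C^{\perp_{\delta_R}}=C^{\perp_{W-\delta-\mu}}$ for every linear code $C$ over $R$. Likewise, on each $F_i$ the $W^{(i)}-\delta^{(i)}-\mu^{(i)}$ dual of $C_i$ coincides with its $\delta_i$-dual. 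Consequently the notions of self orthogonal, self dual, dual containing and LCD agree for the two products, both over $R$ and over each $F_i$.

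The second step is to note that a $\lambda$-constacyclic code is in particular linear over $R$. By Remark \ref{decomposition} it decomposes as $C=\bigoplus_i C_i$ with each $C_i$ linear over $F_i$. The constacyclicity hypothesis plays no further role, since Theorem \ref{followings} holds for arbitrary linear codes. Applying items (1), (2) and (3) of Theorem \ref{followings} with the above parameters then gives statements (1)--(4) of the Corollary directly: item (2) yields statement (1), item (1) yields statements (2) and (3), and item (3) yields statement (4).

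The only point needing care, which I regard as the main obstacle, is to confirm that the decomposition $C=\bigoplus C_i$ in the Corollary matches the one used in Theorem \ref{followings}. Both should be the componentwise projection of $C$ onto $F_i^n$ via the idempotents of $R$. One should also check that the $\delta_R$ of \cite{Dinh} is indeed the componentwise automorphism $(\delta_0,\ldots,\delta_{k-1})$ rather than some more general automorphism of $R$ permuting isomorphic factors. If \cite{Dinh} does allow permutations of isomorphic components, I would restrict to automorphisms preserving each factor, which I take to be the convention there.
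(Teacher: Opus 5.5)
Your proposal is correct and matches the paper's route: the paper also obtains this corollary by specializing Theorem \ref{followings} to $W^{(i)}=\{1,\ldots,1\}$, $\mu^{(i)}=I_d$ and $\delta^{(i)}=\delta_i$, under which the $W-\delta-\mu$ product on $R^n$ reduces to the componentwise $\delta_R$-product. Your matching of the theorem's items to the corollary's four statements is right, and so is your remark that the $\lambda$-constacyclic hypothesis is used only to ensure that $C$ is linear.
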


	\begin{coro}
		
		If $q^{(j)}>n>k^{{(j)}^2}\;\forall\;j=1,2,..,k$, where $k^{(j)}$ is the dimension of the code $C_j$ over $F_{q_j}$ then there exists $W=\{W^{(1)},W^{(2)},...,W^{(k)}\}$ such that $C$ is $W-\delta-\mu$ self orthogonal $\forall\; \delta\in\;\displaystyle\bigoplus_{j=1}^{k} Aut(F_{q_j})$ and $\mu=\{\mu^{(1)},\mu^{(2)},...,\mu^{(k)}\}$ with $\mu^{(j)}\in \; S_n$.
	\end{coro}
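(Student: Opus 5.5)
The plan is to reduce the statement componentwise and then apply the finite-field existence result in each component.

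First, use Remark \ref{decomposition} to write $C=\bigoplus_{j=1}^{k}C_j$, where each $C_j$ is a linear code of length $n$ over $F_{q_j}$ of dimension $k^{(j)}$.

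Next, fix an arbitrary $\delta=(\delta^{(1)},\ldots,\delta^{(k)})$ with $\delta^{(j)}\in Aut(F_{q_j})$ and $\mu=\{\mu^{(1)},\ldots,\mu^{(k)}\}$ with $\mu^{(j)}\in S_n$. For each $j$ the hypothesis $q^{(j)}>n>{k^{(j)}}^2$ is exactly the hypothesis of Proposition \ref{Proposition:4.9} for the code $C_j$ over $F_{q_j}$. That proposition then provides $W^{(j)}=\{w^{(j)}_1,\ldots,w^{(j)}_n\}\subseteq F_{q_j}^*$ such that $C_j$ is $W^{(j)}-\delta^{(j)}-\mu^{(j)}$ self orthogonal. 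Set $W=\{W^{(1)},\ldots,W^{(k)}\}$.

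Finally, apply part (1) of Theorem \ref{followings}: since every constituent $C_j$ is $W^{(j)}-\delta^{(j)}-\mu^{(j)}$ self orthogonal, $C$ is $W-\delta-\mu$ self orthogonal. If one wants to avoid citing Theorem \ref{followings}, the same conclusion follows directly from the definition of the product over $R$. For $x,y\in C$ the components $x^{(j)},y^{(j)}$ lie in $C_j$, so every coordinate $<x^{(j)},y^{(j)}>_{W^{(j)}-\delta^{(j)}-\mu^{(j)}}$ vanishes, and hence $<x,y>_{W-\delta-\mu}=0$.

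The one real obstacle is the order of quantifiers. Proposition \ref{Proposition:4.9} gives a $W$ for each fixed $\delta$ and $\mu$, because the linear system $(4)$ depends on them. So the statement must be read as ``for every $\delta,\mu$ there exists $W$'', matching the finite-field proposition, not as a single $W$ that works for all $\delta,\mu$ at once. I would state this reading explicitly and choose $W^{(j)}$ after $\delta^{(j)}$ and $\mu^{(j)}$ have been fixed.

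Two smaller points also need care:
\begin{enumerate}[(i)]
\item The symbol $k$ is used both for the number of fields and inside $k^{(j)}$. These should be kept distinct.
\item Some $C_j$ may be zero. In that case any $W^{(j)}$ works, and the condition is consistent with this since ${k^{(j)}}^2=0<n$.
\end{enumerate}
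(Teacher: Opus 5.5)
Your reduction is exactly the paper's: you decompose $C=\bigoplus_j C_j$, apply Proposition \ref{Proposition:4.9} to each constituent, and reassemble via Theorem \ref{followings}(1). Your direct check from the componentwise definition of the product is equally valid. Your caution about the order of quantifiers is justified, because the literal reading (a single $W$ for all $\delta,\mu$) is false. Take $C=\langle (1,2,0)\rangle\subseteq F_5^3$, so $q=5>n=3>k^2=1$. Then $\mu=I_d$ forces $w_1+4w_2=0$, while $\mu=(1\,2)$ forces $2w_1+2w_2=0$, and together these give $w_1=w_2=0$.

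The genuine gap is the step where Proposition \ref{Proposition:4.9} supplies $W^{(j)}$. That proposition is false, so neither your argument nor the paper's proves the corollary, and the corollary itself is false as stated.
\begin{itemize}
\item \emph{Counterexample over a field.} Take $C=\langle e_1\rangle\subseteq F_3^2$, so $q=3>n=2>k^2=1$, and $\mu=I_d$. Then $\langle e_1,e_1\rangle_{W-\delta-\mu}=w_1\delta(1)=w_1\neq 0$ for every $W\subseteq F_3^*$. So $C$ is never $W$-$\delta$-$\mu$ self orthogonal.
\item \emph{Counterexample to the corollary.} Use this code as the constituent $C_1$ of a code over $R=F_3\oplus F_3$, with, say, $C_2=0$ and $\mu^{(1)}=I_d$. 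Then $x=(e_1,0)\in C$ has $\langle x,x\rangle_{W-\delta-\mu}=(w^{(1)}_1,0)\neq 0$ for every $W$. This refutes the statement under either quantifier order.
\item \emph{Where the proof of Proposition \ref{Proposition:4.9} breaks.} It claims that exactly $q^{\nu-1}$ vectors of $\mathrm{Ker}(A)$ have $i$-th coordinate $0$. This fails whenever $v_i$ vanishes identically on $\mathrm{Ker}(A)$, i.e.\ whenever $e_i$ lies in the row space of $A$; in the example, $A=[1\;\;0]$.
\item \emph{Corrected criterion.} For $q>n$, a suitable $W$ exists if and only if no $e_i$ lies in the row space of $A$. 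In that case each $\mathrm{Ker}(A)\cap\{v: v_i=0\}$ is a proper subspace of size $q^{\nu-1}$, so their union has at most $nq^{\nu-1}<q^{\nu}$ elements.
\end{itemize}
If you add this hypothesis for each matrix $A_j$ built from $C_j$, $\delta^{(j)}$, $\mu^{(j)}$, your componentwise argument goes through verbatim. The condition $n>{k^{(j)}}^2$ then becomes superfluous.
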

	
	\begin{proof} It is a consequence of Theorem \ref{followings}(1) and follows from the decomposition of the $W-\delta-\mu$ dual space over $R$ given in Proposition \ref{8.2}. The existence part follows from a result of such a $W$ for a finite field $F_q$ of Proposition \ref{Proposition:4.9}.
	\end{proof}	
	\begin{coro} There exists $W=\{W^{(1)},W^{(2)},...,W^{(k)}\}$ such that $C$ is $W-\delta-\mu$ dual containing if and only if for all $j=1,2,...,k$ there exist vectors $g^{(1)},g^{(2)},...,g^{(s_j)}$ in $C_j$, where $s_j$ is the dimension of the Euclidean dual of $C_j$ such that for every $i=1,2,...,n$, $\delta^{(j)}(g_i^{(l)})\neq 0,\;\forall \;l=1,2,...,s_j.$
	\end{coro}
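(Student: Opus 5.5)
The plan is to reduce the statement over $R$ to its field version, Proposition \ref{5.8}, one constituent at a time. First, by Remark \ref{decomposition}, I write $C=\bigoplus_{j=1}^{k}C_j$ with $C_j$ a linear code of length $n$ over $F_{q_j}$. Next I use Theorem \ref{followings}(2): for a given $W=\{W^{(1)},\dots,W^{(k)}\}$, the code $C$ is $W-\delta-\mu$ dual containing if and only if each $C_j$ is $W^{(j)}-\delta^{(j)}-\mu^{(j)}$ dual containing. That theorem rests on the decomposition $C^{\perp_{W-\delta-\mu}}=\bigoplus_j C_j^{\perp_{W^{(j)}-\delta^{(j)}-\mu^{(j)}}}$ of Proposition \ref{8.2}, together with the observation that an inclusion of direct sums of components holds exactly when it holds componentwise.

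The key structural point is that the parameter $W$ over $R$ is, by definition, a free tuple $(W^{(1)},\dots,W^{(k)})$, with each $W^{(j)}\subseteq F_{q_j}^*$ chosen independently. It follows that the following two conditions are equivalent:
\begin{itemize}
\item there is a $W$ making $C$ dual containing;
\item for each $j$ there is a $W^{(j)}$ making $C_j$ dual containing.
\end{itemize}
For the forward direction, I take the components of a good $W$. For the converse, I assemble the individually chosen $W^{(j)}$ into one $W$. Throughout, $\delta$ and $\mu$ stay fixed componentwise.

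Finally, I apply Proposition \ref{5.8} to each constituent field $F_{q_j}$, with $\delta^{(j)}$ and $\mu^{(j)}$ fixed. A suitable $W^{(j)}$ exists if and only if there are codewords $g^{(1)},\dots,g^{(s_j)}\in C_j$ with $\delta^{(j)}(g^{(l)}_i)\neq 0$ for all $i$ and $l$, where $s_j=\dim C_j^{\perp_{Euclidean}}$. Conjoining these conditions over $j=1,\dots,k$ gives exactly the stated criterion.

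I expect the main obstacle to be in the bookkeeping rather than the algebra. The first point is to make sure that Proposition \ref{5.8} is applied with the Euclidean dual of $C_j$ taken over $F_{q_j}$, so that the relevant count is $s_j$ and not some dimension over $R$. The second is to check that the quantifiers on $\delta$ and $\mu$ match the statement: they are given data, and only $W$ is being sought. Since the field-level criterion does not depend on $\mu^{(j)}$, and $\delta^{(j)}(a)\neq 0$ is equivalent to $a\neq0$, no further obstacle arises.
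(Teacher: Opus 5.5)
Your reduction matches the paper's route (Remark \ref{decomposition} and Theorem \ref{followings}(2), with the $W^{(j)}$ chosen independently), and that part is correct. The gap is the final step, which the paper's one-line proof shares: Proposition \ref{5.8} is false in both directions. So the corollary is false, and neither your argument nor the paper's proves it.

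\textbf{Sufficiency fails.} Take $n\ge 3$, let $C_1$ be the repetition code over $F_{q_1}$, and let $C_j=F_{q_j}^n$ for $j\ge 2$, where $s_j=0$ and the condition is vacuous.
\begin{itemize}
\item Then $s_1=n-1$, and $g^{(1)}=\cdots=g^{(n-1)}=(1,\dots,1)$ satisfy the stated criterion. If you want distinct vectors, use $(1,1,1),(2,2,2)$ with $n=3$ over $F_3$.
\item But by Proposition \ref{4.4} the dual of $C_1$ has dimension $n-1>1=\dim C_1$ for every $W^{(1)}$. So $C_1$, and hence $C$, is never dual containing. This also contradicts the paper's own bound $|C|\ge q^{n/2}$ for dual-containing codes.
\item Requiring the $g^{(l)}$ to be linearly independent does not rescue the claim. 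Over $F_3$ with $\delta=\mu=\mathrm{id}$, take the code spanned by $(1,1,1,1)$ and $(1,2,1,1)$. Here dual containment is equivalent to self-orthogonality, since both dimensions are $2$. That requires $w_1+w_2+w_3+w_4=0=w_1+2w_2+w_3+w_4$, which forces $w_2=0$.
\end{itemize}

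\textbf{Necessity fails.} Use the paper's own example $C_1=\{x\in F_2^3 : x_1=0\}$ with $W^{(1)}=\{1,1,1\}$, $\delta^{(1)}=\mathrm{id}$, $\mu^{(1)}=(1\,2\,3)$. Its dual is $\{000,010\}\subseteq C_1$. Yet $s_1=1$ and no codeword of $C_1$ has full support.

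\textbf{Where the proof of Proposition \ref{5.8} breaks.}
\begin{itemize}
\item The choice $w_i=v_i^{(k)}/\delta(g^{(k)}_{\mu(i)})$ depends on $k$, but a single $W$ must serve all $s$ basis vectors of $C^{\perp_{Euclidean}}$ at once.
\item That choice gives $w_i=0$ whenever $v_i^{(k)}=0$.
\item The converse treats a nonzero basis vector $v^{(k)}$ as if all its coordinates were nonzero.
\end{itemize}

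Your closing remark that the field-level criterion ignores $\mu^{(j)}$ should have been a warning sign. For the $F_2$ code above, $W^{(1)}$ and $\delta^{(1)}$ are forced. With $\mu^{(1)}=\mathrm{id}$ the dual is $\{000,100\}\not\subseteq C_1$, while with $\mu^{(1)}=(1\,2\,3)$ it is contained. Any correct criterion must therefore depend on $\mu$.

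What your argument does establish is the reduction itself. A suitable $W$ exists for $C$ if and only if, for each $j$, some $W^{(j)}$ satisfies $C_j^{\perp_{Euclidean}}\subseteq T_j(C_j)$ (Proposition \ref{5.3}, with $T_j$ built from $W^{(j)},\delta^{(j)},\mu^{(j)}$). It is the full-support reformulation that cannot be repaired.
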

	
	\begin{proof} It follows from Theorem \ref{followings}(2) and the existence result for $W$, for $C$ to be dual containing over $F_q$ in Proposition \ref{5.8}.\end{proof} 
	
	\begin{coro} There exists $W=\{W^{(1)},W^{(2)},...,W^{(k)}\}$ such that $C$ is $W-\delta-\mu$ LCD code for any choice of $\delta\in\;\displaystyle\bigoplus_{j=1}^{k} Aut(F_q^{(j)})$ and $\mu=\{\mu^{(1)},\mu^{(2)},...,\mu^{(k)}\}$ with $\mu^{(j)}\in \; S_n$ if $q^{(j)}>k^{(j)}+1\;\forall\;j=1,2,...,k$, where $k^{(j)}$ is the dimension of $C_j$ over $F_q^j$.  \end{coro}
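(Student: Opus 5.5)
The plan is to reduce to the field case componentwise, in the same way as the two preceding corollaries.

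Write $C=\bigoplus_{j=1}^{k}C_j$ as in Remark \ref{decomposition}, where $C_j$ is a linear code of length $n$ over $F_{q_j}$ of dimension $k^{(j)}$. By Theorem \ref{followings}(3), $C$ is $W-\delta-\mu$ LCD if and only if each constituent code $C_j$ is $W^{(j)}-\delta^{(j)}-\mu^{(j)}$ LCD. That result rests on the decomposition of the dual in Proposition \ref{8.2}: an element lies in $C\cap C^{\perp_{W-\delta-\mu}}$ exactly when each of its components lies in $C_j\cap C_j^{\perp_{W^{(j)}-\delta^{(j)}-\mu^{(j)}}}$. So it suffices to choose each $W^{(j)}$ independently.

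For each fixed $j$, the hypothesis $q_j>k^{(j)}+1$ is exactly the hypothesis of Theorem \ref{5.12} applied to $C_j$ over $F_{q_j}$. That theorem yields $W^{(j)}\subseteq F_{q_j}^*$ making $C_j$ a $W^{(j)}-\delta^{(j)}-\mu^{(j)}$ LCD code.

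The one point needing care is the phrase ``for any choice of $\delta$ and $\mu$''. As stated, Theorem \ref{5.12} fixes $\delta,\mu$ and then produces $W$. Indeed, the Schwartz--Zippel polynomial $P$ there, the determinant of $G(T(G))^T$, depends on $\delta$ and $\mu$. I would therefore read the quantifiers in the same way as in Theorem \ref{5.12}: given $\delta=(\delta^{(j)})$ and $\mu=(\mu^{(j)})$, there exists a suitable $W$.

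If a single $W^{(j)}$ working for all pairs $(\delta^{(j)},\mu^{(j)})$ were wanted, I would replace $P$ by the product of the determinant polynomials over all pairs. Its degree is $k^{(j)}\cdot|Aut(F_{q_j})|\cdot n!$, and it is nonzero only if each factor is nonzero. Schwartz--Zippel would then require a much larger field, so this stronger uniform reading is not what the stated bound supports.

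Assembling $W=\{W^{(1)},\dots,W^{(k)}\}$ from the componentwise choices, each $C_j$ is LCD for its component product. By Theorem \ref{followings}(3), $C$ is then $W-\delta-\mu$ LCD.
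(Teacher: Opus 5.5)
Your reduction matches the paper's proof, which is the single line ``Theorem \ref{followings}(3) plus the field case, Theorem \ref{5.12}''. Your remark is also correct: the determinant polynomial depends on $(\delta,\mu)$, so only the reading ``for each $\delta,\mu$ there is a $W$'' is supported.

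The gap is in the step you import from Theorem \ref{5.12}. Its Schwartz--Zippel argument needs $P=\det\bigl(G(T(G))^T\bigr)$ to be a \emph{nonzero} polynomial in $w_1,\dots,w_n$. Neither the paper nor you checks this, and for $\mu\neq\mathrm{id}$ it can vanish identically. Take $n=2$, $q_1\ge 3$, $C_1=\{(a,0):a\in F_{q_1}\}$ (so $k^{(1)}=1<q_1-1$), $\mu^{(1)}=(1\;2)$, and say all other $C_j=0$. For $x=(a,0)$ and $y=(b,0)$ in $C_1$,
\[
\langle x,y\rangle=w_1a\,\delta(y_2)+w_2\cdot 0\cdot\delta(y_1)=0 .
\]
So $C_1\subseteq C_1^{\perp_{W^{(1)}-\delta^{(1)}-\mu^{(1)}}}$ for every $W^{(1)}$ and every $\delta^{(1)}$. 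By Theorem \ref{followings}(3), no $W$ makes $C$ LCD. Hence the corollary, and Theorem \ref{5.12} itself, is false even in your weaker reading. Your product-of-determinants idea for the uniform version is likewise the zero polynomial as soon as one factor is.

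What is missing is a criterion for $P\not\equiv 0$. Take a code over $F_q$ of dimension $k$ with generator matrix $G$. Write $T(G)^T=D_WN$ with $N_{is}=\delta(g^{(s)}_{\mu(i)})$. Cauchy--Binet then gives
\[
P(w)=\sum_{I\subseteq\{1,\dots,n\},\,|I|=k}\pm\det(G_I)\,\delta\bigl(\det(G_{\mu(I)})\bigr)\prod_{i\in I}w_i ,
\]
where $G_I$ is the $k\times k$ submatrix of $G$ on the columns in $I$. The monomials are distinct, so $P\not\equiv 0$ exactly when some information set $I$ of the code has $\mu(I)$ also an information set. This holds automatically for $\mu=\mathrm{id}$ and fails in the example above.

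So the statement needs an extra hypothesis on each pair $(C_j,\mu^{(j)})$ to be true. Under it, $P$ is nonzero and homogeneous of degree $k^{(j)}$. Your argument then goes through for each fixed $(\delta,\mu)$: apply Schwartz--Zippel on $(F_{q_j}^*)^n$ using $q_j>k^{(j)}+1$, then assemble $W$ componentwise.
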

	
	\begin{proof} It follows from Theorem \ref{followings}(3) and the existence result for such a $W$ in case when the alphabet is $F_q$ in Proposition \ref{5.12}.\end{proof}
	\section{Conclusion and future scope}
	Classically, the most used inner products in coding theory have been the Euclidean and Hermitian products which have been explored somewhat exhaustively. A thorough study of duals, along with their interrelationships with the original codes with respect to these classical products with varied applications already exists in literature. Applications such as the CSS construction (using Euclidean dual) and the Hermitian construction (using Hermitian dual) of quantum codes signify their importance. To get specific and more varied applications it is useful to define more general products which provide more flexibility for desirable properties such as self duality, linear complementarity, etc. For example, $l-Galois$ product \cite{Fan} was used for quantum error correcting codes \cite{Liu}. In this paper, we have defined a product that generalizes Euclidean, Hermitian, $l-Galois$ and $\delta$ products. We have given a basic description, relations with Euclidean dual, characterizations of duals, existence conditions for self duality, complementary duals, among others, for $W-\delta-\mu$ product. We hope that the product finds applications in new quantum code constructions with the help of specific usage of the parameters $W,\delta$ and $\mu$ to get curated duals.

	The results obtained in this work indicate that the $W$--$\delta$--$\mu$ product constitutes a useful extension of existing inner products. The computational evidence demonstrates that the enlarged parameter space consisting of various choices of $W$ and the permutation $\mu$ leads to substantially more number of products than those available under the $\delta$-product \cite{Dinh}. hence, it increases the likelihood of obtaining LCD and self-orthogonal codes from the same generator matrix. Such flexibility may prove valuable in applications where multiple dual structures are desirable, allowing the selection of products best suited to particular algebraic or computational requirements.
	
	The generality of the proposed framework also opens several interesting directions for future research. The theory developed here can be extended to other important classes of linear codes, including \emph{cyclic, constacyclic, quasi-cyclic,} and \emph{quasi-twisted codes}, as well as to \emph{finite chain rings, Galois rings}, and other algebraic structures. From a cryptographic perspective, the large number of choices for $W$, together with the freedom of selecting suitable permutations, provides a significantly expanded space for constructing code families with multiple dual structures. This flexibility may be useful in the design of \emph{\textbf{code-based cryptographic}} schemes and \emph{\textbf{post-quantum cryptographic}} constructions, where a richer collection of cryptographic keys can enhance scheme diversity and potentially increase resistance to structural analysis. These directions constitute promising topics for further theoretical and computational investigation.
	
	The future scope of study with respect to the introuced product includes but not limited to the following:
	\begin{enumerate}
		\item Using different alphabets such as chain rings, Galois rings and more generally arbitrary finite rings with ambient space as modules.
		\item Exploring the existence conditions for $\delta$ and $\mu$ which guarantee duals with specific properties like self duality and complementarity.
		\item We have shown that the $W-\delta-\mu$ dual of constacyclic codes need not be constacylic, it is interesting to find the most general conditions on these parameters which preserve constacyclicity for duals.
		\item In case of semisimple rings, $W,\delta$ and $\mu$ are described componentwise. The case where $W,\delta,\mu$ are arbitrary may give novel results.
		\item Studying applications of the proposed product in code-based and post-quantum cryptography.
		
		\item Identifying optimal parameters $W$, $\delta$, and $\mu$ for constructing cryptographically secure code families.
		
		\section*{Acknowledgement} The second author is  grateful to the Council of Scientific and Industrial Research for funding this research, File number: 09/0001(20883)/2025-EMR-I.
		
	\end{enumerate}

\end{document}